\newtheorem*{rep@theorem}{\rep@title}
\newcommand{\newreptheorem}[2]{%
\newenvironment{rep#1}[1]{%
 \def\rep@title{#2 \ref{##1}}%
 \begin{rep@theorem}}%
 {\end{rep@theorem}}}
\newtheorem*{rep@proposition}{\rep@title}
\newcommand{\newrepproposition}[2]{%
\newenvironment{rep#1}[1]{%
 \def\rep@title{#2 \ref{##1}}%
 \begin{rep@proposition}}%
 {\end{rep@proposition}}}
\newtheorem*{rep@corollary}{\rep@title}
\newcommand{\newrepcorollary}[2]{%
\newenvironment{rep#1}[1]{%
 \def\rep@title{#2 \ref{##1}}%
 \begin{rep@corollary}}%
 {\end{rep@corollary}}}
\newtheorem{theorem}{Theorem}
\newtheorem{proposition}{Proposition}
\newtheorem{claim}{Claim}
\newtheorem{definition}{Definition}
\newtheorem{lemma}{Lemma}
\newtheorem{corollary}{Corollary}
\newcommand{\ktt}{\ensuremath{K_{2,3}}}
\tikzset{smallvertex/.style={
    fill=black,
    circle,
    inner sep=0pt,
    minimum size = 4pt
}}
\tikzset{bigvertex/.style={
    fill=black,
    circle,
    inner sep=0pt,
    minimum size = 6pt
}}
\tikzset{terminal/.style={
    fill=black,
    rectangle,
    inner sep=0pt,
    minimum size = 6pt
}}
\title{When Do Gomory-Hu Subtrees Exist?}
\author{Guyslain Naves
\footnote{guyslain.naves@lis-lab.fr, Aix Marseille University, Université de Toulon, CNRS, LIS, Marseille, France}
, Bruce Shepherd
\footnote{fbrucesh@cs.ubc.ca, University of British Columbia, Vancouver, Canada}
}
\date{\today}
\begin{document}

\maketitle

\begin{abstract}
  Gomory-Hu (GH) Trees are a classical sparsification technique for
  graph connectivity. It is one of the fundamental models in
  combinatorial optimization which also continually finds new
  applications, most recently in social network analysis. For any
  edge-capacitated undirected graph $G=(V,E)$ and any subset of {\em
    terminals} $Z \subseteq V$, a Gomory-Hu Tree is an
  edge-capacitated tree $T=(Z,E(T))$ such that for every $u,v \in Z$,
  the value of the minimum capacity $uv$ cut in $G$ is the same as in
  $T$. Moreover, the minimum cuts in $T$ directly identify (in a
  certain way) those in $G$.  It is well-known that we may not always
  find a GH tree which is a subgraph of $G$.  For instance, every GH
  tree for the vertices of $K_{3,3}$ is a $5$-star. We characterize
  those graph and terminal pairs $(G,Z)$ which always admit such a
  tree. We show that these are the graphs which have no
  \emph{terminal-$K_{2,3}$ minor}. That is, no $K_{2,3}$ minor whose
  vertices correspond to terminals in $Z$. We also show that the
  family of pairs $(G,Z)$ which forbid such $K_{2,3}$ ``$Z$-minors''
  arises, roughly speaking, from so-called Okamura-Seymour instances.
  More precisely, they are subgraphs of {\em $Z$-webs}. A $Z$-web is
  built from planar graphs with one outside face which contains all
  the terminals and each inner face is a triangle which may contain an
  arbitrary graph. This characterization yields an additional
  consequence for multiflow problems. Fix a graph $G$ and a subset
  $Z \subseteq V(G)$ of terminals. Call $(G,Z)$ {\em cut-sufficient}
  if the cut condition is sufficient to characterize the existence of
  a multiflow for any demands between vertices in $Z$, and any edge
  capacities on $G$. Then $(G,Z)$ is cut-sufficient if and only if it
  is terminal-$K_{2,3}$ free.
\end{abstract}

\section{Introduction}

The notion of sparsification is ubiquitous in applied mathematics and
combinatorial optimization is no exception. For instance, shortest
paths to a fixed root vertex in a graph $G=(V,E)$ are usually stored
as a tree directed towards the root. Another classical application is
that of Gomory-Hu (GH) Trees \cite{gomory1961multi} which encode all
of the minimum cuts of an edge-capacitated undirected graph $G=(V,E)$,
with capacities $c: E \rightarrow \mathbb{R}^+$. For each $s,t \in V$, we
denote by $\lambda(s,t)$ the capacity of a minimum cut separating $s$
and $t$. Equivalently $\lambda(s,t)$ is the maximum flow that can be
sent between $s,t$ in $G$ with the given edge capacities. Gomory and
Hu showed that one may encode the $O(n^2)$ minimum cuts by a tree on
$V$.

A \emph{spanning edge-capacitated tree} for $G$ is a spanning tree
$T = (V,E')$ together with a capacity function
$c' : E' \rightarrow \mathbb{R}^+$. Any edge $e \in E'$ induces a
\emph{fundamental cut} $G(A,B)$, where $A$ and $B$ are the vertex set
of the two components of $T \setminus e$. Here we use $G(A,B)$, or
occasionally $\delta(A)=\delta(B)$, to denote the associated cut in
$G$, that is, $G(A,B) = \{e \in E(G) : \mbox{$e$ has one endpoint in
  $A$ and the other in $B$}\}$.

\begin{definition}
  \label{defn:encoding}
  Let $T$ be a spanning edge-capacitated tree.  An edge
  $e=ab \in E(T)$ is \emph{encoding} if its fundamental cut $G(A,B)$
  is a minimum $ab$-cut and its capacity is $c'(e)$, that is,
  $c(G(A,B))=c'(e)$.
\end{definition}

A \emph{Gomory-Hu tree} (GH tree for concision) is a spanning
edge-capacitated tree whose edges are all encoding. In this case, it
is an exercise to prove that any minimum cut can be found as follows.
For $s,t \in V$ we have that
$\lambda(s,t)=\min\{ c'(e): e \in T(st) \}$, where $T(st)$ denotes the
unique path joining $s,t$ in $T$.




In some applications we only specify a subset $Z \subseteq V$ for
which we need cut information. We refer to $Z$ as the {\em terminals}
of the instance. The Gomory-Hu method allows one to store a compressed
version of the GH Tree which only captures cut values $\lambda(s,t)$
for $s,t \in Z$. Namely, a GH $Z$-Tree has $V(T)=Z$.

It is well-known that there may not always exist a GH tree which is a
subgraph of $G$.  For instance, every GH tree for the vertices of
$K_{3,3}$ is a $5$-star (cf. \cite{schrijver2003combinatorial}).  Our
first main result characterizes the graphs which admit GH subtrees.
More precisely, we say that $G$ has the {\em GH Property} if any
subgraph $G'$ of $G$ with any edge-capacity function $c$ has a
Gomory-Hu tree $T$ that is a subgraph of $G'$.

\begin{theorem}\label{thm:1sum}
 $G$ has the GH Property if and only if $G$ is
  the $1$-sum of outerplanar and $K_4$ graphs.
\end{theorem}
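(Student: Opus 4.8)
The plan is to route the whole statement through the excluded-minor condition hinted at in the abstract, and to separate the graph theory from the flow theory. The first observation is that having no $K_{2,3}$ minor is closed under taking subgraphs, so $G$ has the GH Property if and only if $G$ itself has no $K_{2,3}$ minor: if $G$ is minor-free then so is every subgraph $G'$, and each such $G'$ will be shown to have a GH subtree, while conversely a $K_{2,3}$ minor inside any subgraph will be turned into a capacity instance with no GH subtree. I would therefore prove and then chain two equivalences: a purely structural one, that $G$ is a $1$-sum of outerplanar and $K_4$ graphs if and only if $G$ has no $K_{2,3}$ minor; and a flow-theoretic one, that $G$ admits GH subtrees for all subgraphs and capacities if and only if $G$ has no $K_{2,3}$ minor.

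For the structural equivalence, since $K_{2,3}$ is $2$-connected any $K_{2,3}$ minor lives inside a single block of $G$, so $G$ is $K_{2,3}$-minor-free iff every block is, and a $1$-sum into outerplanar and $K_4$ pieces is exactly the statement that every block is outerplanar or equal to $K_4$. It then remains to classify the $2$-connected graphs $H$ with no $K_{2,3}$ minor. If $H$ has no $K_4$ minor either, it is outerplanar, since outerplanar graphs are precisely the $\{K_4,K_{2,3}\}$-minor-free graphs (Chartrand--Harary). If $H$ has a $K_4$ minor I claim $H=K_4$: when $H$ is $3$-connected and not complete, any two non-adjacent vertices are joined by three internally disjoint paths of length at least two (Menger), which is exactly a $K_{2,3}$ subdivision, while a complete $3$-connected graph is some $K_n$ and every $K_n$ with $n\ge 5$ contains $K_{2,3}$ as a subgraph; when $H$ is merely $2$-connected one pushes the $K_4$ minor onto one side $A$ of a $2$-separation $\{x,y\}$ and then subdivides the virtual edge $xy$ by an $x$--$y$ path through the other side (using $2$-connectivity), again producing a $K_{2,3}$. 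Hence the only exception is $K_4$ itself.

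For the direction that the structure implies GH subtrees, I would first prove a composition lemma: GH subtrees glue across $1$-sums, because minimum cuts separating vertices in different blocks are governed by the cut vertices, so the per-block GH subtrees can be concatenated at those vertices. This reduces matters to the $2$-connected pieces, and by the closure remark it suffices to handle outerplanar blocks and $K_4$. For $K_4$ the property is immediate from completeness: every spanning tree, in particular whatever tree the Gomory--Hu algorithm outputs, already uses only edges of $K_4$, and any proper subgraph of $K_4$ is outerplanar. For a $2$-connected outerplanar graph I would induct using a vertex $v$ of degree at most two: deleting $v$ keeps the graph outerplanar, and the series/leaf structure at $v$ lets me attach $v$ as a leaf to one of its neighbours in the GH subtree of $G'-v$ with the correct capacity, the non-crossing chord structure being exactly what guarantees the reduction never conflicts.

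The remaining and hardest direction is that a $K_{2,3}$ minor forces a capacity instance with no GH subtree. Because $K_{2,3}$ has maximum degree three, a $K_{2,3}$ minor is the same as a $K_{2,3}$ subdivision subgraph, so I would take $G'$ to realize the minor, with the hub branch sets $S_a,S_b$ and middle sets $S_x,S_y,S_z$, assigning very large capacity to all edges internal to the branch sets and unit capacity to the six connecting edges, so that the only cheap cuts are those of $K_{2,3}$: $\lambda=3$ between the two hubs and $\lambda=2$ for everything involving a middle. The obstruction is the one responsible for the $5$-star of $K_{3,3}$: in any spanning subtree the two hub blobs must be connected through some middle blob, and then the tree edge leaving a hub blob has a fundamental cut of value $3$, exceeding the true minimum cut value $2$ between that hub and the middle, so it cannot be encoding. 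Making this rigorous for the blown-up instance --- showing that \emph{no} spanning subtree of $G'$ can be a GH tree, not merely the one directly mimicking $K_{2,3}$, and checking that the large internal capacities make the blow-up faithful so that the internal terminal pairs are handled --- is the main obstacle, and is precisely where the fact that every branch set contains terminals is used. I expect to discharge it by the cut-counting argument just sketched, applied to the unique hub-to-hub path of any candidate subtree.
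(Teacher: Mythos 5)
Your overall architecture is the same as the paper's (reduce to blocks via a 1-sum gluing lemma, classify the 2-connected $K_{2,3}$-minor-free graphs as outerplanar or $K_4$, handle $K_4$ by completeness, derive necessity from the $K_{2,3}$ obstruction), and your structural classification via Menger in the 3-connected case and 2-separations otherwise is a correct, slightly more detailed version of the paper's one-liner that any proper subdivision of $K_4$, or $K_4$ with an attached path, contains a $K_{2,3}$. But there is a genuine gap exactly where the paper invests its main technical effort: the claim that 2-connected outerplanar graphs have GH subtrees (Theorem~\ref{th:gh-subtrees}, proved in the paper by a global argument using perturbed capacities, the clockwise arrangement of fundamental-cut shores on the outer cycle, Lemma~\ref{lem:middle}, and a final planarity contradiction). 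Your proposed induction --- delete a degree-2 vertex $v$ and ``attach $v$ as a leaf to one of its neighbours in the GH subtree of $G-v$'' --- fails at its key step: a degree-2 vertex need not be a leaf in \emph{any} GH tree. Take the 4-cycle $a,v,b,c$ with $c(av)=c(vb)=3$ and $c(bc)=c(ca)=1$. If $v$ were a leaf, the fundamental cut of its tree edge would be $\delta(v)$, of capacity $6$, yet $\lambda(v,w)\le 4$ for every $w$ (witnessed by the cuts $\delta(\{v,b\})$, $\delta(\{a,c\})$, $\delta(\{c\})$), so $v$ is internal in every GH tree of this instance, e.g.\ $T=\{av,vb,ac\}$. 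Worse, deleting $v$ corrupts the cut values among the survivors ($\lambda(a,b)$ drops from $4$ to $1$), so the GH tree of $G-v$ does not even carry the right data. A repairable version of your plan would use a series reduction (replace $a$--$v$--$b$ by an edge $ab$ of capacity $\min(c(av),c(vb))$, added to $c(ab)$ if present) followed by a lifting argument that subdivides the tree edge $ab$ by $v$ when it is used and attaches $v$ as a leaf otherwise --- but each case needs an encoding verification, and the reduced edge $ab$ may be a chord absent from $G$. None of this is supplied, and it is the heart of the theorem.

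Two smaller points. The direction you flag as ``the remaining and hardest'' --- that a $K_{2,3}$ minor kills the GH Property --- is actually the easy part, and your blow-up with large internal capacities over-engineers it. Since the GH Property quantifies over all subgraphs and $K_{2,3}$ has maximum degree $3$, a $K_{2,3}$ minor yields a $K_{2,3}$-subdivision subgraph $S$; take $G'=S$ with unit capacities. Then $\lambda(u_1,u_2)=3$ for the two hubs, while every other vertex $w$ of $S$ has $c(\delta(w))=2$, so on the $u_1u_2$-path of any candidate spanning GH tree some edge incident to such a $w$ would need capacity at least $3$ but can encode a cut of value at most $2$ --- this is Proposition~\ref{prop:notree} verbatim, no faithfulness-of-blow-up argument required. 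Relatedly, your appeal to ``the fact that every branch set contains terminals'' imports language from the terminal version (Theorem~\ref{thm:minorGH}); it plays no role in Theorem~\ref{thm:1sum}, where all vertices are unrestricted.
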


We then turn our attention to the generalized version where we are
given a graph-terminal pair $(G,Z)$. Let $G$ be endowed with edge
capacities. A {\em GH $Z$-Tree} is then a capacitated tree
$T=(V(T),E(T))$ (cf. \cite{korte2012combinatorial}). Formally, the
vertices of $T$ form a partition $\{B(v) : v \in Z\}$ of $V(G)$, with
$z \in B(z)$ for all $z \in Z$. Hence Definition~\ref{defn:encoding}
extends as follows.  An edge $B(s)B(t)$ of $T$ is {\em encoding} if
its fundamental cut $(B(S),B(U))$ induces a minimum $st$-cut in
$G$. As before, if all edges are encoding, then $T$ determines the
minimum cuts for all pairs $s,t \in Z$.

We characterize those pairs $(G,Z)$ which admit a GH $Z$-tree as a
minor for any edge capacities on $G$.  We call such a tree a {\em GH
  $Z$-minor} (a formal definition is delayed to
Section~\ref{sec:last}).

Our starting point is the  following elementary observation.
\begin{proposition}\label{prop:notree}
  $K_{2,3}$ has no Gomory-Hu tree that is a subgraph of itself.
\end{proposition}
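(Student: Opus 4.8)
The plan is to fix unit capacities on $\ktt$, compute all the pairwise minimum cut values, and then show that no spanning tree which is a subgraph can correctly encode the cut between $a$ and one of its neighbours. Write the two sides of the bipartition as $\{a,b\}$ and $\{x,y,z\}$, so that the six edges are $ax,ay,az,bx,by,bz$. First I would record the cut values: isolating any one of the degree-two vertices $x,y,z$ shows that $\lambda(s,t)=2$ for every pair other than $\{a,b\}$, while the three internally disjoint $ab$-paths through $x$, $y$ and $z$ (together with the cut $\delta(a)$ of value $3$) give $\lambda(a,b)=3$.

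Next I would pin down the shape of any spanning tree $T$ that is a subgraph of $\ktt$. Since $a$ and $b$ lie in the same part of the bipartition and every edge crosses it, the unique path $T(ab)$ must alternate between the two sides; as each vertex of $\{x,y,z\}$ has both of its neighbours in $\{a,b\}$, this path can visit exactly one vertex of $\{x,y,z\}$. Hence $T(ab)$ has the form $a,w,b$ for a single $w\in\{x,y,z\}$, so both edges $aw$ and $bw$ belong to $T$, and the two remaining edges of $T$ attach each of the other two vertices of $\{x,y,z\}$ to either $a$ or $b$.

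The heart of the argument is to examine the edge $aw$. Deleting it from $T$ partitions $V$ into a side $A\ni a$ and a side $B\ni w$, and since $b$ reaches the rest of the tree only through $w$ we have $b\in B$. I would then count the edges of the fundamental cut $G(A,B)$: the edge $aw$ crosses, and for each of the two remaining vertices of $\{x,y,z\}$ exactly one of its two incident edges crosses --- the edge to $a$ when that vertex lies in $B$, and the edge to $b$ when it lies in $A$. This forces $c(G(A,B))=3$ no matter how the last two edges of $T$ were chosen. As $\lambda(a,w)=2$, the fundamental cut of $aw$ is not a minimum $aw$-cut, so $aw$ cannot be encoding and $T$ fails to be a Gomory--Hu tree.

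I expect the only real subtlety to be the structural claim that every such subtree routes $a$ to $b$ through a single intermediate vertex; once that invariant is in hand the counting step is immediate, and because the conclusion is uniform across all spanning trees, no case analysis on the placement of the remaining two edges is needed.
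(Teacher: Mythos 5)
Your proposal is correct and takes essentially the same approach as the paper: both arguments pin down that any spanning subtree of $K_{2,3}$ must route $a$ to $b$ through a single intermediate vertex $w$, and both derive the contradiction at the edge $aw$ from the fact that $\lambda(a,w)=2$. The only cosmetic difference is that the paper gets there via the path-minimum formula ($\lambda(a,b)=3$ forces capacity at least $3$ on the tree edge $aw$, clashing with the encoding requirement), whereas you count the edges of the fundamental cut of $aw$ directly to show it has capacity exactly $3$; either way the edge $aw$ cannot be encoding.
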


Even if GH $Z$-minors always existed in a graph $G$, it may still
contain a $K_{2,3}$ minor. The proposition implies, however, that it
should not have a $K_{2,3}$ minor where all nodes in the minor are
terminals. Given a set $Z$ of terminals, we say that $H$ is a {\em
  terminal minor}, or {\em $Z$-minor}, of $G$ if nodes of $V(H)$
correspond to terminals of $G$. In other words, it is a minor such
that each $v \in V(H)$ arises by contracting a connected subgraph
which contains a vertex from $Z$.  Hence a natural necessary condition
for $G$ to always contain GH $Z$-minors is that $G$ must not contain a
terminal-$\ktt$ minor.  We show that this is also sufficient (see
Section~\ref{sec:last} for the formal statement).

\begin{theorem}
\label{thm:minorGH}
Let $Z \subseteq V$.
  $G$ admits a Gomory-Hu tree that is a minor, for
  any capacity function, if and only if $(G,Z)$ is a terminal-$\ktt$
  minor free graph.
\end{theorem}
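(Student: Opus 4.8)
The statement is an equivalence, and I would prove the two directions separately: the forward (necessity) direction is a short consequence of Proposition~\ref{prop:notree} together with minor-monotonicity, while the reverse (sufficiency) direction carries the real weight and relies on the paper's structural characterization of terminal-$\ktt$-free pairs.

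For necessity I would argue contrapositively, using that the property ``admits a GH $Z$-tree as a minor for every capacity function'' is inherited by terminal minors. Suppose $(G,Z)$ has a terminal-$\ktt$ minor, witnessed by connected branch sets $\{X_v : v \in V(\ktt)\}$, each containing a terminal $t_v \in Z$. I would lift the unit-capacity $\ktt$ instance up to $G$: every edge internal to some $X_v$ gets capacity $+\infty$, every edge between $X_u$ and $X_v$ gets the capacity of the corresponding $\ktt$-edge, and every remaining edge gets capacity $0$. Under these capacities the minimum cuts among the five terminals $t_v$ agree with those of $\ktt$, so contracting along the branch sets turns any GH $Z$-minor of $G$ into a GH $Z$-tree of $\ktt$ realized as a minor. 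Since the branch sets collapse to single vertices, that minor is a genuine spanning subgraph of $\ktt$, contradicting Proposition~\ref{prop:notree}. Hence the property fails whenever a terminal-$\ktt$ minor is present.

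For sufficiency I would invoke the structural characterization proved elsewhere in the paper: every terminal-$\ktt$-free pair $(G,Z)$ is a subgraph of a \emph{$Z$-web}, a planar graph with all terminals on one outer face and each inner face a triangle possibly filled with an arbitrary terminal-free graph. Because both terminal-$\ktt$-freeness and the target property pass to subgraphs (extend $c$ by zero to the ambient web), it suffices to build a GH $Z$-minor for a $Z$-web. The first reduction eliminates the graphs filling the triangular faces: each such graph meets the rest of $G$ only at its three corners and contains no terminal, so its entire influence on terminal cuts is captured by the three pairwise connectivities among its corners. I would contract it to a bounded gadget embeddable inside the triangle, which preserves all terminal cuts; since contraction is a minor operation, a GH $Z$-minor of the reduced instance lifts back to one of $G$ by re-expanding the gadgets into corner branch sets. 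This leaves the planar skeleton: a triangulated planar graph with $Z$ on the outer cycle, i.e.\ an Okamura--Seymour type instance.

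The core is to produce a GH $Z$-minor for this outer-terminal planar instance, and I would exploit planar duality: a minimum cut separating two terminals corresponds to a shortest dual path between the two arcs of the outer face they delimit, and by planarity these cuts can be chosen pairwise non-crossing, hence laminar. A laminar family of minimum cuts is precisely the fundamental-cut family of a GH tree, so the combinatorial tree $T$ on $Z$ exists; what remains is to realize $T$ as a minor. I would thread $T$ through $G$ along the non-crossing dual paths, using triangulation to guarantee that between two consecutive nested cuts an edge of $G$ is always available to carry the corresponding tree edge, keeping the branch sets connected. I expect the main obstacle to be exactly this \emph{simultaneous} realization --- arranging the laminar cut family into a single tree whose every edge is at once encoding and embeddable as a minor, rather than merely exhibiting each cut in isolation --- and I would manage it by induction, peeling off the outermost nested cut (an outer ``ear'' of the planar structure) and applying the inductive hypothesis to the smaller instance obtained by contracting across it.
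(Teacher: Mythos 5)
Your necessity direction is essentially the paper's (it sets capacities to $0$, $1$, $\infty$ and invokes Proposition~\ref{prop:notree}; your version is even slightly more careful, though you should still say how terminals of $Z$ outside the five branch sets are pruned from the GH $Z$-tree, e.g.\ by the contraction of Lemma~\ref{lem:subset}). The sufficiency direction, however, has a genuine gap at its very first step: you assert that \emph{every} terminal-$\ktt$-free pair $(G,Z)$ is a subgraph of a $Z$-web, but the structure theorem (Theorem~\ref{thm:k23char}) gives this only for $2$-connected graphs with at least five terminals. When $|Z|\le 4$, terminal-$\ktt$-freeness is vacuous, so $G$ can be anything (say $K_5$ with four terminals) and is nowhere near a subgraph of a $Z$-web; and you never reduce to $2$-connected blocks. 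This is not a patchable technicality, because the strong conclusion your construction aims for is actually \emph{false} in the few-terminal regime: the paper's example of $K_{2,3}$ with the three degree-$2$ vertices and one degree-$3$ vertex as terminals is terminal-$\ktt$-free, yet its unique GH tree has a disconnected bag and occurs only as a \emph{weak} bag minor (after deleting a non-terminal vertex). This is exactly why the paper formalizes the theorem via the weak GH Minor Property, proves the $|Z|\le 4$ case separately by the cut-counting argument of Proposition~\ref{prop:blech}, and handles cut vertices by promoting them to terminals, gluing the blocks' trees, and pruning back with Lemma~\ref{lem:subset}. Your proof omits all three ingredients.

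The core planar step is also only a sketch precisely where the paper has its actual argument. Your gadget reduction for the $3$-separated graphs matches Lemma~\ref{lem:upshot} in spirit, but after that the paper does not thread a laminar family of dual paths through the graph: it takes the (unique, after perturbation) GH $Z$-tree and proves two concrete facts --- each bag $B(t)$ induces a connected subgraph, and adjacent bags are joined by an edge of $G$ --- both via Lemma~\ref{lem:middle} applied to the clockwise-ordered central shores $X_1,\ldots,X_k$ on the outer face, mirroring the outerplanar Theorem~\ref{th:gh-subtrees}. You correctly identify the simultaneous-realization issue as the main obstacle, but your resolution (``peel off the outermost nested cut and induct'') is not an argument, and the heuristic supporting it --- that triangulation always supplies an edge between consecutive cuts --- is unfounded: the reduced instance is a \emph{subgraph} of a web, i.e.\ merely a planar graph with $Z$ on the outer face, so inner faces need not be triangles. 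Note also that laminarity of the minimum-cut family is automatic from Gomory--Hu theory and buys nothing here; connectivity of the bags is exactly what can fail (it does in the $4$-terminal $K_{2,3}$), so it must be proved from the cut structure, not arranged by a routing scheme.
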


Establishing the sufficiency requires a better understanding of
terminal minor-free graphs.  We show that the family of pairs $G,Z$
which forbid such terminal-$\ktt$ minors arises precisely as subgraphs
of \emph{$Z$-webs}. $Z$-webs are built from planar graphs with one
outside face which contains all the terminals $Z$ and each inner face
is a triangle to which we may add an arbitrary graph inside connected
to the three vertices; these additional arbitrary graphs are called
\emph{$3$-separated subgraphs}. Subgraphs of $Z$-webs are called {\em
  Extended Okamura-Seymour Instances}.

\begin{theorem}\label{thm:k23char}
  Let $G$ be a $2$-connected terminal-$\ktt$ minor free graph. Then
  either $G$ has at most $4$ terminals or it is an Extended
  Okamura-Seymour Instance.
\end{theorem}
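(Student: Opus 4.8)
The plan is to reduce $G$ to a planar \emph{skeleton} by peeling off the $3$-separated subgraphs, then to certify that the skeleton embeds with all terminals on a single face by an apex/Kuratowski argument, and finally to triangulate and re-insert. Call a subgraph $B$ a \emph{terminal-free $3$-piece} if there is a $3$-separation $(A,B)$ with $V(A)\cap V(B)=\{x,y,z\}$ and no terminal in $V(B)\setminus\{x,y,z\}$. First I would repeatedly contract each inclusion-maximal terminal-free $3$-piece onto the triangle $xyz$ (adding the edges $xy,yz,zx$), obtaining a skeleton $\hat G$. Since a connected graph attached at three vertices contracts onto a triangle, $\hat G$ is a minor of $G$, and because the contracted interiors carry no terminals, a terminal-\ktt{} minor of $\hat G$ would lift to one of $G$; hence $\hat G$ is again $2$-connected, terminal-\ktt{} free, has the same terminal set, and now contains \emph{no} terminal-free $3$-piece. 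Conversely $G$ is recovered from $\hat G$ by re-inserting arbitrary graphs into the triangles $xyz$, so if $\hat G$ embeds in the plane with $Z$ on the outer face and each re-insertion triangle a face, then $G$ is a subgraph of the corresponding $Z$-web. It therefore suffices to analyze $\hat G$.

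To force $Z$ onto a common face, add an apex $w$ adjacent to every terminal and set $\hat G^{+}=\hat G+w$; then $\hat G$ has an embedding with $Z$ on a single face iff $\hat G^{+}$ is planar (place $w$ in that face; conversely deleting $w$ merges its incident faces into one face carrying all of $Z$). By Kuratowski--Wagner it remains to rule out $K_5$ and $K_{3,3}$ minors of $\hat G^{+}$. I would choose such a minor with branch sets of minimum total size. Every branch set adjacent to $w$ meets $N(w)=Z$, hence is terminal-rooted; the crux is to show the \emph{remaining} branch sets are terminal-rooted too. A minimal terminal-free branch set reaches the rest of the minor through exactly three vertices, so it would constitute a terminal-free $3$-piece -- impossible in $\hat G$ -- unless it is a single non-terminal vertex of degree $\geq 4$, a case I would eliminate using $2$-connectivity (Menger) to route that vertex's attachments to distinct terminals on the way to the apex. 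Once all branch sets are terminal-rooted, a terminal $K_5$ or terminal $K_{3,3}$ minor results, each of which contains a terminal-\ktt{} minor (as $\ktt\subseteq K_5$ and $\ktt\subseteq K_{3,3}$ as labelled subgraphs), a contradiction. This is precisely where the hypothesis of at least five terminals enters: with only four terminals the unavoidable obstruction is a harmless terminal $K_4$, explaining the exceptional case.

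It follows that $\hat G^{+}$ is planar, so fix an embedding of $\hat G$ with $Z$ on the outer face. Since $\hat G$ is $2$-connected its outer boundary is a cycle carrying $Z$ and every bounded face is bounded by a cycle. I would then add edges to triangulate the bounded faces while keeping the outer polygon and keeping each suppressed triangle $xyz$ as a face; planarity and the placement of $Z$ are preserved, so the result is a $Z$-web skeleton containing $\hat G$. Re-inserting the $3$-separated subgraphs into their triangular faces produces a $Z$-web containing $G$, so $(G,Z)$ is an Extended Okamura--Seymour instance, as required. (One could instead argue directly that a bounded face with at least four boundary vertices yields, via two outer hub-terminals and three internally disjoint connectors to the face, a terminal-\ktt{} minor; the triangulate-and-embed route sidesteps this.)

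I expect the rooting step of the second paragraph to be the main obstacle: converting an arbitrary Kuratowski minor of $\hat G^{+}$ into one whose branch sets all contain terminals. The delicate points are branch sets that are high-degree non-terminal vertices (not captured by the $3$-piece suppression) and the case where $w$ itself lies in a large branch set; both call for a careful exchange/rerouting argument exploiting $2$-connectivity and the minimality of the chosen minor, together with the standing assumption of at least five terminals to supply enough terminal anchors around every small separator. This case analysis, rather than the reduction or the triangulation, should carry the real weight of the proof.
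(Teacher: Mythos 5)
There is a genuine gap, and it sits exactly where you predicted: the ``rooting'' step of your second paragraph does not work as sketched, and it is the step that carries the whole argument. Your claim that a minimal terminal-free branch set of a Kuratowski minor in $\hat G^{+}$ ``reaches the rest of the minor through exactly three vertices, so it would constitute a terminal-free $3$-piece'' confuses degree in the minor with attachment in the host graph. A branch set representing a degree-$3$ node of $K_{3,3}$ is (after minimization) a subtree with at most three leaves, and the \emph{minor} only uses three edges leaving it; but as a subgraph of $\hat G$ it may attach to the rest of $\hat G$ at arbitrarily many vertices, so it need not sit inside any $3$-separation. Hence suppressing terminal-free $3$-pieces gives you no leverage over such branch sets, and the dichotomy ``terminal-free $3$-piece or single vertex of degree $\geq 4$'' is false (think of a branch set buried in a large non-terminal planar mesh attached along many vertices). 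In addition, a Kuratowski minor of $\hat G^{+}$ uses the apex edges $wz$, which are not edges of $\hat G$, and $w$ may lie inside a large branch set; you flag both problems but defer them, so the proposal never actually converts an obstruction in $\hat G^{+}$ into a terminal-\ktt{} minor of $G$. Characterizing when a graph embeds with a prescribed vertex set on one face via terminal-rooted obstructions is genuinely delicate --- it is essentially equivalent to a disjoint-paths/linkage statement --- and cannot be dispatched by minimality-of-branch-sets bookkeeping alone.

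For contrast, the paper avoids this entirely by a different route: it first shows that with $|Z| \geq 5$ a terminal-$K_4$ minor already forces a terminal-\ktt{} minor, then builds a terminal circuit through \emph{all} terminals in cyclic order (via a $2$-connected terminal minor and an ear-decomposition argument), then proves that two vertex-disjoint paths joining crossing terminal pairs would create a terminal-$K_4$ (Proposition~\ref{prop:no-2-linkage}), and finally invokes the generalized $2$-linkage theorem of Seymour, Shiloach and Thomassen (Theorem~\ref{th:linkage}), which outputs precisely the $Z$-web structure --- planar part with $Z$ on the outer cycle and $3$-separated cliques on triangular faces. Your apex-plus-Kuratowski plan is, in effect, an attempt to re-prove that linkage theorem from scratch, and the re-proof is where it breaks. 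If you want to salvage your architecture, the honest move is to cite a linkage-type theorem at the crux rather than the Kuratowski argument; your outer framing (peeling maximal terminal-free $3$-pieces, and re-inserting them into triangular faces, with the observation that a separating triangle whose interior is terminal-free would contradict maximality of the suppressed pieces) is sound and close in spirit to how the paper handles the $3$-separated sets.
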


This immediately implies the following.

\begin{corollary}\label{cor:k23char}
  $G$ is terminal-$\ktt$ free if and only if for any $2$-connected
  block $B$, the subgraph obtained by contracting every edge not in $B$
  is terminal-$\ktt$ free.
\end{corollary}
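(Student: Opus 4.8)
The plan is to treat this as a purely minor-theoretic reduction whose only real input is that $\ktt$ is $2$-connected. Fix a $2$-connected block $B$. Every edge of $G$ with both endpoints in $V(B)$ already lies in $E(B)$ (by maximality of blocks), so each edge outside $E(B)$ has an endpoint in a pendant part attached to $B$ at a single cut vertex. Writing $C_v$ for the set consisting of $v$ together with all vertices separated from $V(B)\setminus\{v\}$ by the cut vertex $v$, contracting every edge outside $B$ collapses each $C_v$ onto $v$ and leaves $B$ untouched. Let $\pi\colon V(G)\to V(B)$ be this contraction: it is the identity on $V(B)$ and sends $C_v$ to $v$. Then the contracted graph $G_B$ has vertex set $V(B)$, edge set $E(B)$, and natural terminal set $Z_B:=\pi(Z)$, so $u\in Z_B$ exactly when $u\in Z$ or $u$ is a cut vertex $v$ with $C_v\cap Z\neq\emptyset$. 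In particular $(G_B,Z_B)$ is a terminal minor of $(G,Z)$ for every choice of $B$.

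For the forward direction I would prove the contrapositive: if some $G_B$ has a terminal-$\ktt$ minor, so does $G$. This is just composition of terminal minors. Given a $\ktt$-model $W_1,\dots,W_5$ in $G_B$ with each $W_i$ meeting $Z_B$, set $\widetilde W_i:=\pi^{-1}(W_i)$. Each $\widetilde W_i$ is connected (the preimage of a connected set under a contraction is connected, since every $C_v$ is connected), the $\widetilde W_i$ are pairwise disjoint, each meets $Z$ because $Z_B=\pi(Z)$, and every connecting edge of the model lifts to an edge of $G$ between the appropriate preimages (contraction creates no edges). Hence $G$ has a terminal-$\ktt$ minor.

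The substantive direction is the converse. Starting from a terminal-$\ktt$ model $V_1,\dots,V_5$ of $(G,Z)$ (disjoint, connected, $z_i\in V_i\cap Z$) together with its six connecting edges, I would first pass to a topological witness: since $\ktt$ has maximum degree three, choosing a branch vertex $b_i\in V_i$ and routing internally disjoint paths through the branch sets and connecting edges yields a subdivision $S$ of $\ktt$ inside the model. As a subdivision of a $2$-connected graph, $S$ is $2$-connected and therefore contained in a single block $B$; consequently all five branch vertices $b_i$ and all six connecting edges $e_{ij}$ lie in $B$. I fix this $B$ and claim $\pi(V_1),\dots,\pi(V_5)$ is a terminal-$\ktt$ model in $G_B$.

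The heart of the matter — and the step I expect to be the main obstacle — is that contracting onto $B$ does not merge the branch sets. The key observation is that a pendant part $C_v$ meets $V(B)$ only in $v$, so if $V_i$ meets $C_v$ at all then, as $V_i$ is connected and already contains $b_i\in V(B)$, the joining subpath forces $v\in V_i$. Hence a vertex $u\in V(B)$ can belong to $\pi(V_i)$ only if $u\in V_i$ (for a cut vertex $u=v$ this is exactly the previous sentence), so a common vertex of $\pi(V_i)$ and $\pi(V_j)$ would give $u\in V_i\cap V_j$, contradicting disjointness of the original model. Once disjointness is secured the rest is bookkeeping: each $\pi(V_i)$ is connected and contains $\pi(z_i)\in Z_B$, and each $e_{ij}\in E(B)=E(G_B)$ still joins $\pi(V_i)$ to $\pi(V_j)$ because its endpoints lie in $V(B)$, where $\pi$ is the identity. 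This produces the desired terminal-$\ktt$ model in $G_B$, and taking the contrapositive completes the converse. The delicate points are thus the localization of the witness (which is why I pass to the $2$-connected subdivision $S$, ensuring both branch vertices and connecting edges are absorbed into one block) and the verification that no pendant part is entered by two different branch sets; connectivity of images and preimages, survival of connecting edges, and the identity $Z_B=\pi(Z)$ are routine.
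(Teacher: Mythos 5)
Your proof is correct and takes essentially the same route as the paper's: the forward direction is composition of terminal minors, and the converse uses the $2$-connectivity of \ktt{} to localize the witness inside a single block $B$ and then pushes the model through the contraction onto $B$. The paper's own proof is a two-sentence sketch of exactly this argument; your extra steps (passing to a subdivision to place the branch vertices and connecting edges in one block, and checking that pendant parts cannot merge distinct branch sets under $\pi$) correctly fill in the details it leaves implicit.
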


These results also yield the following consequence for multiflow
problems.  Let $G,H$ be graphs such that $V(H) \subseteq V(G)$. Call a
pair $(G,H)$ {\em cut-sufficient} if the cut condition is sufficient
to characterize the existence of a multiflow for any demands on edges
of $H$ and any edge capacities on $G$. If $Z \subseteq V(G)$, we also
call $(G,Z)$ {\em cut-sufficient} if $(G,H)$ is cut-sufficient for any
simple graph on $Z$.

\begin{corollary}\label{cor:flows}
  $(G,Z)$ is cut-sufficient if and only if it is terminal-$\ktt$ free.
\end{corollary}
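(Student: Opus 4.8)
The plan is to prove the two implications separately, using Proposition~\ref{prop:notree} as the engine for necessity and the structure theorem (Theorem~\ref{thm:k23char}) as the engine for sufficiency; I would also invoke the block reduction afforded by Corollary~\ref{cor:k23char}.

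For necessity I would argue the contrapositive: if $(G,Z)$ has a terminal-\ktt{} minor then it is not cut-sufficient. The starting point is an explicit instance showing \ktt{} itself is not cut-sufficient. Writing the two sides as $\{u_1,u_2\}$ and $\{w_1,w_2,w_3\}$, I put unit capacity on all six edges and impose unit demands on the four non-edges $u_1u_2,\,w_1w_2,\,w_1w_3,\,w_2w_3$ (a simple demand graph on the five vertices). Every demand pair is at distance exactly $2$, so the all-ones length function certifies that any multiflow needs total capacity at least $2\cdot 4 = 8 > 6$, hence the instance is infeasible; a short check over all cuts shows the cut condition nonetheless holds, the tight cuts being the singletons $\{w_i\}$ and the sets $\{u_i,w_j\}$. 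To transport this to $(G,Z)$, take the five branch sets $S_{u_1},\dots,S_{w_3}$ of the terminal-\ktt{} minor, give capacity $1$ to one representative $G$-edge for each of the six \ktt{}-edges, capacity $0$ to every other inter-branch-set edge, and a large capacity to a spanning tree inside each branch set; then place the four demands above between the terminals contained in the corresponding branch sets. Contracting each branch set recovers exactly the bad \ktt{} instance, so a feasible multiflow in $G$ would project to a feasible one in \ktt{}, a contradiction; and every cut of $G$ either separates some branch set, in which case it cuts a large-capacity tree edge and is non-binding, or descends to a cut of \ktt{} already checked. Hence $(G,Z)$ is not cut-sufficient.

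For sufficiency I would first reduce to the $2$-connected case. A cut vertex $v$ splits $G$ into pieces $G_1,G_2$; routing any demand across $v$ as a demand to $v$ followed by a demand from $v$ shows that $(G,Z)$ is cut-sufficient if and only if each $\bigl(G_i,(Z\cap V(G_i))\cup\{v\}\bigr)$ is, and by Corollary~\ref{cor:k23char} the terminal-\ktt{}-free hypothesis passes to precisely these blocks, since contracting the rest of $G$ into a block turns its cut vertices into terminals. For a $2$-connected terminal-\ktt{}-free block, Theorem~\ref{thm:k23char} gives two cases. If there are at most $4$ terminals, then every demand graph on $Z$ is a subgraph of $K_4$, and the cut condition is sufficient for $K_4$ demands on any supply graph by the classical theorem of Papernov, which settles this case. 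Otherwise the block is an Extended Okamura-Seymour Instance: a planar graph with all terminals on the outer face, together with $3$-separated subgraphs sitting inside triangular inner faces. Each such subgraph $S$ meets the rest only at its triangle $\{x,y,z\}$ and contains no terminal, so I would replace it by a mimicking triangle on $\{x,y,z\}$ whose edge capacities $a,b,c$ solve $a+c=\lambda_S(x)$, $a+b=\lambda_S(y)$, $b+c=\lambda_S(z)$, where $\lambda_S(\cdot)$ denotes the internal min-cut separating one corner from the other two; these are nonnegative because such min-cut values obey the triangle inequality. Because three-terminal routing is itself governed by the cut condition, this replacement preserves both the cut condition and the feasibility of every external multiflow. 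After replacing all $3$-separated subgraphs we are left with a planar graph whose terminals all lie on one face, where the fractional Okamura-Seymour theorem states exactly that the cut condition implies feasibility.

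I expect the sufficiency direction, and within it the treatment of the $3$-separated subgraphs, to be the main obstacle. One must verify that the mimicking-triangle replacement is a genuine flow sparsifier, that is, that the polytope of boundary flow vectors $(f_{xy},f_{yz},f_{xz})$ routable through $S$ coincides with that of the triangle, and that this exactness, combined with the block reduction, transports both the cut condition and feasibility in both directions without slack. The remaining ingredients, namely the explicit \ktt{} instance, the $1$-sum decomposition, and the $K_4$ and Okamura-Seymour citations, become routine once this sparsification step is in place.
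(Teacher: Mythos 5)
Your proposal is correct and follows essentially the same route as the paper: for necessity you build the very same bad instance (capacity $0$ on deleted and $\infty$ on contracted edges of the terminal-\ktt{} minor, unit capacity on the six minor edges, with the four demands consisting of the degree-$3$ pair plus the triangle on the degree-$2$ terminals), and for sufficiency you run the same pipeline of replacing each $3$-separated set by an exact three-terminal mimicking gadget, invoking Okamura--Seymour on the resulting planar instance, and routing the flow back inside each piece via the $K_4$-demand theorem (Schrijver, Corollary 72.2a). The only cosmetic differences are that you use a mimicking triangle with capacities $a,b,c$ where the paper uses a claw with a new center $s$ and edge capacities $\lambda_S(x),\lambda_S(y),\lambda_S(z)$ (the two gadgets are equivalent, by the triangle inequality on these min-cut values), that you certify infeasibility of the \ktt{} instance directly with the all-ones metric ($8>6$) rather than citing the known $4/3$ flow-cut gap, and that you make explicit the block decomposition and the $\le 4$-terminal case, which the paper's proof leaves implicit.
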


One can compare this to results of Lomonosov and Seymour
(\cite{lomonosov1985combinatorial,seymour1980four}, cf. Corollary
72.2a \cite{schrijver2003combinatorial}) which characterize the class
of demand graphs $H$ such that every supply graph $G$ ``works'',
i.e. for which $(G,H)$ is cut-sufficient for {\em any} graph $G$ with
$V(H) \subseteq V(G)$.  They prove that any such $H$ is (a subgraph
of) either $K_4, C_5$ or the union of two stars.  A related question
asks for which graphs $G$ is it the case that $(G,H)$ is
cut-sufficient for every $H$ which is a subgraph of $G$; Seymour
\cite{seymour1981matroids} shows that this is precisely the class of
$K_5$ minor-free graphs. We refer the reader to \cite{chekuri2013flow}
for discussion and conjectures related to cut-sufficiency.

The paper is structured as follows. In the next section we prove that
every outerplanar instance has a GH tree which is a subgraph. In
Section~\ref{sec:1sum} we present the proof of
Theorem~\ref{thm:1sum}. In Section~\ref{sec:k23char} we provide the
proofs for Theorem~\ref{thm:k23char} and Corollary~\ref{cor:flows}.
Section~\ref{sec:last} wraps up with a proof of
Theorem~\ref{thm:minorGH}.

\subsection{Some Notation and a Lemma}
\label{sec:notation}

\begin{figure}
  \label{fig:middle}
  \begin{center}
    \begin{tikzpicture}[x=0.35cm,y=0.35cm]
      \fill[color=red,nearly transparent]  plot[smooth cycle, tension=.7]
      coordinates {(2.6,4.6) (0.2,4.8) (-1.8,4.4)
        (-3.4,3.4) (-4.4,1.8) (-3.8,-0.2)
        (-2.7,0.2) (-1.6,2.4) (0.2,3.2)
        (2,3.2) (3.8,3.2) (3.8,4.2)};
      \fill[color=blue, nearly transparent]  plot[smooth cycle, tension=.7]
      coordinates {(-2.6,3.0) (-2,2.6) (-1,2.6)
        (1,3)  (4.3,2.3) (5.8,0.8)
        (7,0.6) (7,2.1) (4.8,4.2)
        (2,5) (-0.6,5) (-2,4.6) (-2.6,3.8)};

      \draw (1.5,1) ellipse (5 and 3);

      \draw [dashed] plot[smooth cycle, tension=.7]
      coordinates {(-0.1,-1.4) (0,-2.2) (-2.2,-1.4)
        (-3.2,-0.6) (-2.6,-0.2) (-1.4,-0.8)};
      \draw [dashed] plot[smooth cycle, tension=.7]
      coordinates {(-3,0.2) (-3.6,0.2) (-4,1.8)
        (-3,2.8) (-2.4,2.4) (-3,1.2)};
      \draw [dashed] plot[smooth cycle, tension=.7]
      coordinates {(1.4,-1.8) (1.4,-2.4) (3.4,-2.4)
        (5,-1.6) (6.6,-0.2) (6,0.4)
        (4.8,-0.6) (3.2,-1.4)};
      \draw [dashed] plot[smooth cycle, tension=.7]
      coordinates {(4.4,3.8) (4.2,3.2) (5.2,2.6)
        (6,1.8) (6,1) (6.8,0.8) (6.8,2)
        (5.6,3.2)};
      \draw [dashed] plot[smooth cycle, tension=.7]
      coordinates {(0.6,4.4) (-0.8,4.2) (-2,3.8)
        (-2.4,3.4) (-1.8,2.8) (-0.4,3.4)
        (1.4,3.6) (3.4,3.4) (3.4,4)
        (3,4.2) (1.6,4.4)};

      \node[bigvertex] (v) at (0.6,-2) {};
      \node[smallvertex] (x2) at (-3.2,2) {};
      \node[smallvertex] (x4) at (5.65,2.65) {};
      \node[smallvertex] (x1) at (-0.3,-1.8) {};
      \node[smallvertex] (x5) at (1.6,-2) {};

      \draw (v) -- (x4);

      \draw  plot[smooth, tension=.7]
      coordinates {(0.6,-2) (-0.8,0.8) (-3.2,2)};

      \path (v) node[anchor=north] {\Large $v$};
      \node at (-2.6,-2) {\large $X_1$};
      \node at (0.6,6) {\large $X_3$};
      \node at (6.8,3.4) {\large $X_4$};
      \node at (-5,2.8) {\large $X_2$};
      \node at (4.6,-2.6) {\large $X_5$};

    \end{tikzpicture}
  \end{center}

\end{figure}

We always work with connected graphs and usually assume (without loss
of generality) that the edge capacities $c(e)$ have been adjusted so
that no two cuts have the same capacity.\footnote{This can be achieved
  in a standard way by adding multiples of $2^{-\delta}$ where
  $\delta = O(|E|)$.} In particular, the minimum $st$-cut is unique
for any vertices $s,t$. Moreover, we may assume any minimum cut
$\delta(X)$ to be {\em central}, a.k.a a {\em bond}. That is,
$G[X],G[V \setminus X]$ are connected.  For any $X \subseteq V(G)$ we
use shorthand $c(X)$ to denote the capacity of the cut $\delta(X)$,
and if $Y \subseteq V(G)$, then $d(X,Y)$ denotes the sum of capacities
for all edges with one endpoint in $X$, and the other in $Y$. We
consistently use $c'(e)$ to denote the computed capacities on edges
$e$ in some Gomory-Hu tree.

As we use the following lemma several times throughout we introduce it
now.

\begin{lemma}
  \label{lem:middle}
  Let $t \in V(G)$ and $X,Y$ be disjoint subsets which induce
  respectively a minimum $xt$-cut and a minimum $yt$-cut where
  $x \in X, y \in Y$.  For any non-empty subset $M$ of $V$ which is
  disjoint from $X \cup Y \cup \{t\}$, we have
  $d(M,V \setminus (X \cup Y \cup M)) > 0$.
\end{lemma}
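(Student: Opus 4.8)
The plan is to argue by contradiction. Suppose that $d(M, V \setminus (X \cup Y \cup M)) = 0$, so that every edge leaving $M$ has its other endpoint in $X \cup Y$. Decomposing the cut of $M$ according to where its edges go, this assumption immediately gives
\[
  c(M) = d(X, M) + d(Y, M),
\]
since the term $d(M, V \setminus (X \cup Y \cup M))$ vanishes. The goal is then to contradict this identity using the minimality of the cuts $\delta(X)$ and $\delta(Y)$.

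The key step is to uncross $M$ against each of $X$ and $Y$. Consider the set $X \cup M$. Since $x \in X$ and $t \notin X \cup M$ (because $t$ lies outside both $X$ and $M$ by hypothesis), the set $X \cup M$ is an $xt$-cut. As $M$ is non-empty and disjoint from $X$, we have $X \cup M \neq X$, so by the uniqueness of the minimum $xt$-cut — guaranteed by the standing genericity assumption on the capacities — we obtain the strict inequality $c(X \cup M) > c(X)$. I would then invoke the standard identity for the cut of a union of disjoint sets,
\[
  c(X \cup M) = c(X) + c(M) - 2\, d(X, M),
\]
which rearranges the inequality into $c(M) > 2\, d(X, M)$. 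The symmetric argument applied to $Y \cup M$ yields $c(M) > 2\, d(Y, M)$.

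Adding these two inequalities gives $2\, c(M) > 2\,\bigl(d(X, M) + d(Y, M)\bigr)$, and substituting the identity $c(M) = d(X, M) + d(Y, M)$ obtained from the contradiction hypothesis produces $2\, c(M) > 2\, c(M)$, which is absurd. Hence $d(M, V \setminus (X \cup Y \cup M)) > 0$.

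The only delicate point — and the place I expect the argument is most easily mishandled — is the passage to \emph{strict} inequalities. Without the genericity assumption one would only get $c(X \cup M) \geq c(X)$, and the final chain would collapse to the harmless $2\, c(M) \geq 2\, c(M)$. So the whole argument hinges on the assumption that minimum $st$-cuts are unique, which upgrades each uncrossing bound to a strict one and thereby rules out the equality $c(M) = d(X, M) + d(Y, M)$. Recognizing that $X \cup M$ and $Y \cup M$ (rather than intersections) are the correct sets to test against minimality is the main idea; the remainder is the routine cut identity.
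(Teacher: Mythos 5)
Your proof is correct and follows essentially the same route as the paper's: both test the sets $X \cup M$ and $Y \cup M$ against the (unique, by the standing genericity assumption) minimum $xt$- and $yt$-cuts to get strict inequalities, and your two uncrossing identities, summed, are exactly the paper's identity $c(M \cup X) + c(M \cup Y) = c(X) + c(Y) + 2\,d(M, V \setminus (X \cup Y \cup M))$. The only cosmetic difference is that you phrase it as a contradiction from $d(M, V \setminus (X \cup Y \cup M)) = 0$, whereas the paper derives $d > 0$ directly from the same computation.
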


\begin{proof}
  We have
  \begin{align*}
    &\quad c(M \cup X) + c(M \cup Y) \\
    &= c(X) + c(Y)
      + 2d(M, V \setminus (X \cup Y \cup M))\\
    &< c(M \cup X) + c(M \cup Y)
      + 2d(M, V \setminus (X \cup Y \cup M))
  \end{align*}
  \noindent
  where the second inequality follows from the fact that $\delta(M \cup X)$
  (respectively $\delta(Y \cup M)$) separates $t$ from $X$ (respectively $Y$)
  but  $M \cup X \neq X$ (respectively $M \cup Y \neq Y$).
\end{proof}

\section{Outerplanar graphs have Gomory-Hu Subtrees}\label{sec:gh-tree}
\label{sec:outerplanar}

\begin{theorem}\label{th:gh-subtrees}
  Any $2$-connected outerplanar graph $G$ has a Gomory-Hu tree that is
  a subgraph of $G$.
\end{theorem}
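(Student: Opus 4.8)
The plan is to induct on $|V(G)|$, peeling off a degree-$2$ vertex via a series reduction and then reinserting it into the GH tree produced for the smaller graph. A $2$-connected outerplanar graph on at least three vertices has its outer boundary equal to a Hamiltonian cycle and contains a vertex $v$ of degree exactly $2$; let $a,b$ be its two neighbors (hence consecutive with $v$ on the outer cycle) and set $\alpha=c(va)\le\beta=c(vb)$. I form $G'$ from $G-v$ by adding the edge $ab$ with capacity $\min(\alpha,\beta)=\alpha$ (adding $\alpha$ to the capacity of $ab$ if that edge already exists). Two facts need checking: (i) $G'$ is again $2$-connected outerplanar on one fewer vertex — outerplanarity is clear by drawing $ab$ in the boundary region vacated by $v$, and the added edge $ab$ repairs the only $2$-separation that deleting $v$ could introduce, namely one separating $a$ from $b$; and (ii) the reduction preserves all relevant flow values, $\lambda_{G'}(s,t)=\lambda_G(s,t)$ for all $s,t\neq v$, since a degree-$2$ vertex can only act as a conduit carrying at most $\min(\alpha,\beta)$ units between $a$ and $b$. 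By induction $G'$ has a GH tree $T'$ that is a subgraph of $G'$.

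The reinsertion is governed by $\mu:=\lambda_{G-v}(a,b)$. Writing $\lambda_G(v,a)=\alpha+\min(\beta,\mu)$ and $\lambda_G(v,b)=\beta+\min(\alpha,\mu)$, and observing that in a subgraph tree the tree-neighbors of $v$ can only be $a$ or $b$, I split into two cases. If $\mu\ge\alpha$, I attach $v$ as a pendant leaf at $b$: its fundamental cut is $\{v\}$ with capacity $\alpha+\beta$, which equals $\lambda_G(v,b)$ and so is encoding. If $\mu\le\alpha$, I instead subdivide the edge $ab$ of $T'$ by $v$, so that $T$ contains the genuine edges $va$ and $vb$, assigning $c'(va)=\alpha+\mu$ and $c'(vb)=\beta+\mu$; one checks these equal $\lambda_G(v,a)$ and $\lambda_G(v,b)$ and that removing $va$ (resp.\ $vb$) groups $v$ with the $b$-side (resp.\ the $a$-side), which is exactly the minimizing side. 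In either case every remaining edge of $T'$ stays encoding in $G$: its fundamental cut lifts by placing $v$ on the cheaper of the two sides, and the series reduction leaves the cut value unchanged.

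The step I expect to be the real obstacle is making the two cases compatible with the recursion, that is, guaranteeing that the subgraph tree handed up by induction cooperates: the subdivision case needs $T'$ to actually use the edge $ab$, while the leaf case (when $ab\notin E(G)$) needs $T'$ to avoid it. The tool I would use is the structural fact that in a $2$-connected outerplanar graph every central cut $\delta(X)$ is an arc of the outer cycle — otherwise there would be at least two $X$-arcs and two $(V\setminus X)$-arcs alternating around the boundary, and the chords forced by connectivity of $G[X]$ and of $G[V\setminus X]$ would interleave and hence cross, contradicting outerplanarity. Consequently each central min cut contains, as genuine edges, the two boundary edges at the ends of its arc. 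Since $a$ and $b$ are adjacent on the outer cycle of $G'$, the edge $ab$ lies on \emph{every} cut separating them, in particular on the minimum one; hence running the Gomory--Hu procedure so that $a,b$ are split at some stage lets me realize that tree edge along $ab$, producing a subgraph GH tree that contains $ab$ for the subdivision case, while the existence of a second boundary edge for every arc lets me realize $T'$ avoiding $ab$ in the leaf case. Turning ``realize every tree edge along a boundary edge of its arc'' into one globally consistent subgraph tree (a system of distinct representatives over the laminar family of arcs) is the technical heart that the induction is really there to discharge, and I expect that bookkeeping, together with verifying the fundamental-cut lifts through the series reduction, to be the hardest part.
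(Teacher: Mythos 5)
Your reduction machinery is sound as far as it goes: the series reduction (replace the degree-$2$ vertex $v$ by an edge $ab$ of capacity $\alpha=\min(\alpha,\beta)$) does preserve $\lambda(s,t)$ for $s,t\neq v$, the formulas $\lambda_G(v,a)=\alpha+\min(\beta,\mu)$ and $\lambda_G(v,b)=\beta+\min(\alpha,\mu)$ are correct, and both reinsertion moves are correctly priced. But there is a genuine gap exactly where you flag one, and it is not dischargeable by ``bookkeeping'': the inductive hypothesis ``$G'$ has \emph{some} GH tree that is a subgraph'' gives you no control over whether that tree uses the edge $ab$. Note first that when $\mu<\alpha$ the subdivision move is not one option among two but \emph{forced}: a pendant $v$ has fundamental cut $\delta(\{v\})$ of capacity $\alpha+\beta$, which strictly exceeds both $\lambda_G(v,a)=\alpha+\mu$ and $\lambda_G(v,b)=\beta+\mu$, so $v$ cannot be a leaf in any GH subgraph tree. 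Hence your induction actually requires the stronger statement ``$G'$ has a GH subgraph tree \emph{containing} the prescribed boundary edge $ab$'' (and, when $\mu\geq\alpha$ and $ab\notin E(G)$, one \emph{avoiding} $ab$). Under the paper's perturbation convention the GH tree is unique, so there is no tree to choose; without perturbation you face precisely the selection problem you name (one globally consistent choice over the laminar family of arcs) and never solve it. Worse, the strengthened statement is essentially equivalent to the theorem you are proving: by the forcing just described, any GH subgraph tree of $G$ has $v$ of degree $2$ when $\mu<\alpha$, and contracting $v$ yields a GH subgraph tree of $G'$ through $ab$ --- so deferring this claim to the induction is circular rather than a technical afterthought.

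Moreover, the repair mechanism you sketch is itself flawed: ``running the Gomory--Hu procedure so that $a,b$ are split at some stage lets me realize that tree edge along $ab$'' conflates the cut with the tree edge's endpoints. The split creates a tree edge whose eventual endpoints are dictated by how later minimum cuts refine the two sides; its fundamental cut is a minimum $ab$-cut, but the edge itself may join $a'b'$ with $a'\in X$, $b'\notin X$, and in the unique-min-cut regime there is no freedom to relocate it. For contrast, the paper's proof avoids induction entirely: it perturbs capacities so the GH tree is unique, observes that every central min-cut shore is an arc of the outer cycle, uses Lemma~\ref{lem:middle} to show that each fundamental-cut shore at a vertex receives an edge from that vertex (Claim~\ref{claim:xXiconnected}), and then forces each tree edge $xy$ to lie in $E(G)$ by exhibiting four witness edges $e,e',e'',e'''$ whose presence, by outerplanarity, eliminates all $X'$--$Y'$ edges and yields a cut-capacity contradiction. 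Your series-reduction route might be completable by formulating and proving the strengthened inductive statement directly, but as written the central step is missing.
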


\begin{proof}

  Let $G$ be an outerplanar graph with outer cycle
  $C = v_1, v_2, \ldots, v_n$.  As discussed in
  Section~\ref{sec:notation}, we assume that no two cuts have the same
  capacity, so let $T$ be the unique Gomory-Hu tree of $G$. We want to
  prove that $T$ is a subgraph of $G$.

  Notice that the shore of any min-cut in $G$ must be a subpath
  $v_i,v_{i+1},\ldots,v_{j-1},v_j$ (indices taken modulo $n$) because
  we may assume any min-cut $\delta(S)$ to be {\em central} (a.k.a. a
  {\em bond}), that is, both $S$ and $V-S$ induce connected subgraphs.

  \begin{figure}
    \label{fig:ordering}
    \begin{center}
      \begin{tikzpicture}[x=0.4cm,y=0.4cm]

        \draw (1.5,1) ellipse (5 and 3);

        \node[bigvertex] (v1) at (0.6,-2) {};
        \node[smallvertex] (x2) at (-3.2,2) {};
        \node[smallvertex] (x3) at (0.8,4) {};
        \node[smallvertex] (x4) at (5.65,2.65) {};
        \node[smallvertex] (x1) at (-0.1,-1.9) {};
        \node[smallvertex] (x5) at (1.6,-2) {};

        \draw (v1) -- (x3);
        \draw (v1) -- (x4);
        \draw (v1) .. controls (-0.8,0.5) ..  (x2);

        \draw  plot[smooth cycle, tension=.7]
           coordinates {(0.2,-1.6) (-0.2,-2.3) (-2.2,-1.4)
                        (-3.2,-0.6) (-2.6,-0.2) (-1.4,-0.8)};
        \draw  plot[smooth cycle, tension=.7]
           coordinates {(-3,0.2) (-3.6,0.2) (-4,1.8)
                        (-3,2.8) (-2.4,2.4) (-3,1.2)};
        \draw  plot[smooth cycle, tension=.7]
           coordinates {(1.4,-1.8) (1.4,-2.4) (3.4,-2.4)
                        (5,-1.6) (6.6,-0.2) (6,0.4)
                        (4.8,-0.6) (3.2,-1.4)};
        \draw  plot[smooth cycle, tension=.7]
           coordinates {(4.4,3.8) (4,3.2) (5.2,2.6)
                        (6,1.8) (6,1) (6.8,0.8)
                        (6.8,2) (5.6,3.2)};
        \draw  plot[smooth cycle, tension=.7]
           coordinates {(0.6,4.4) (-0.8,4.2) (-2,3.8)
                        (-2.4,3.4) (-2,2.6) (-0.4,3.4)
                        (1.4,3.6) (3.6,3.4) (3.8,4)
                        (3,4.2) (1.6,4.4)};

        \path (v1) node[anchor=north] {\Large $v$};
        \node at (-2.6,-2) {\large $X_1$};
        \node at (0.6,5.2) {\large $X_3$};
        \node at (6.2,3.4) {\large $X_4$};

        \node at (-4.6,2.4) {\large $X_2$};
        \node at (4.6,-2.6) {\large $X_5$};

      \end{tikzpicture}
    \end{center}

  \end{figure}

  Let $v$ be any vertex and consider the fundamental cuts associated
  with the edges incident to $v$ in the Gomory-Hu tree.  The shores
  (not containing $v$) of these cuts define a partition
  $X_1,X_2,\ldots X_k$ of $V \setminus \{v\}$ where each $X_i$ is a
  subpath of $C$. We may choose the indices such that
  $v, X_1, \ldots, X_k$ appear in clockwise order on $C$.

  \begin{claim}\label{claim:xXiconnected}
    For each $i \in \{1,\ldots,k\}$, there is an edge in $G$ from $v$ to some
    vertex in $X_i$.
  \end{claim}

  \begin{proof}
    By contradiction, assume there is no edge from $v$ to
    $X_i$. Notice $i \notin \{1,k\}$ because of the edges of $C$. Let
    $j \in \{1,\ldots,i-1\}$ maximum with $d(v,X_j) \neq \emptyset$,
    and let $j' \in \{i+1,\ldots,k\}$ minimum with
    $d(v,X_{j'}) \neq \emptyset$, hence $d(v,M) = \emptyset$ where
    $M:= X_{j+1} \cup X_{j+2} \ldots \cup X_{j'-1}$.  By taking
    $X=X_j,Y=X_{j'},t=v$, Lemma~\ref{lem:middle} implies that
    $d(M,V \setminus (X_j \cup X_{j'} \cup M)>0$.  However,
    outerplanarity and the existence of edges from both $X_j$ and
    $X_{j'}$ to $v$, imply that there is an edge between $v$ and $M$,
    cf. Figure~\ref{fig:usingouterplanar}. This contradicts the choice
    of $i$, $j$ or $j'$.
  \end{proof}

  \begin{figure}
    \label{fig:usingouterplanar}
    \begin{center}
      \begin{tikzpicture}[x=0.4cm,y=0.4cm]

        \draw  (1.5,1) ellipse (5 and 3);

        \node[bigvertex] (v1) at (0.6,-2) {};
        \node[smallvertex] (x1) at (-0.6,-1.7) {};
        \node[smallvertex] (x2) at (-3.2,2) {};
        \node[smallvertex] (x4) at (5.7,2.7) {};
        \node[smallvertex] (x5) at (1.8,-2) {};

        \draw (v1) -- (x4);
        \draw (v1) ..controls (-0.8,0.5) .. (x2);

        \draw[dashed] plot[smooth cycle, tension=.7]
          coordinates {(-0.2,-1.4) (-0.2,-2.2) (-2.2,-1.4)
                       (-3.2,-0.6) (-2.6,-0.2) (-1.4,-0.8)};
        \draw[very thick] plot[smooth cycle, tension=.7]
          coordinates {(-3,0.2) (-3.6,0.2) (-4,1.8)
                       (-3,2.8) (-2.4,2.4) (-3,1.2)};
        \draw[dashed] plot[smooth cycle, tension=.7]
          coordinates {(1.4,-1.8) (1.4,-2.4) (3.4,-2.4)
                       (5,-1.6) (6.6,-0.2) (6,0.4)
                       (4.8,-0.6) (3.2,-1.4)};
        \draw[very thick]  plot[smooth cycle, tension=.7,]
           coordinates {(4.4,3.8) (4.2,3.2) (5.2,2.6)
                        (6,1) (6.8,0.8) (6.8,2) (5.6,3.5)};
        \draw[dashed] plot[smooth cycle, tension=.7]
           coordinates {(0.6,4.4) (-0.8,4.2) (-2,3.8)
                        (-2.4,3.4) (-1.8,2.8) (-0.4,3.4)
                        (1.4,3.6) (3.4,3.4) (3.4,4)
                        (3,4.2) (1.6,4.4)};

        \path (v1) node[anchor=north] {\Large $v$};
        \node at (-2.6,-2) {\large $X_1$};
        \node at (0.6,5) {\large $M:=X_3$};
        \node at (6.2,3.4) {\large $X_4$};
        \node at (-5,2.8) {\large $X_2$};
        \node at (4.6,-2.6) {\large $X_5$};
        \node at (0.7,2) {?};

        \draw [color=red] (0.6,3.6) -- (0.6,0.8);
        \draw  [color=red] plot[smooth, tension=.7]
           coordinates {(0,3.6) (-0.4,2.4) (-0.8,1.6)};
        \draw  [color=red] plot[smooth, tension=.7]
            coordinates {(1.8,3.8) (2,2.6) (3.4,1.6)};

      \end{tikzpicture}
    \end{center}

\end{figure}

Let $xy \in E(T)$ be an edge of the Gomory-Hu tree. We must prove that
$xy \in E(G)$. Let $\delta(X)$ be the fundamental cut associated with
$xy$, with $x \in X$, define $Y = V \setminus X$. As in the preceding
arguments we may use the fundamental cuts associated to edges incident
to $x$ and partition $X \setminus \{x\}$ into min-cut shores
$X_1,X_2,\ldots, X_k$; we do this by ignoring the one shore $Y$.
Similarly, we may partition $Y \setminus \{y\}$ into min-cut shores
$Y_1,Y_2,\ldots,Y_l$. We can label these so that
$X_1,X_2,\ldots,X_k,Y_1,\ldots,Y_l$ appear in clockwise order around
$C$ - see Figure~\ref{fig:shores}. There is also some
$i \in \{1,\ldots,k\}$ and $j \in \{1,\ldots,l\}$ such that $x$ is
between $X_i$ and $X_{i+1}$ (or $Y_1$ if $i = k$) and $y$ is between
$Y_j$ and $Y_{j+1}$ (or $X_1$ if $j = l$).

  \begin{figure}[htbp]
    \label{fig:shores}
    \begin{center}
      \begin{tikzpicture}[x=0.6cm,y=0.6cm]

        \draw  (1.6,0.9) ellipse (5 and 3);

        \node[bigvertex] (x) at (0.6,-2.03) {};
        \node[smallvertex] (x4) at (-2.2,-1) {};
        \node[smallvertex] (x5) at (-3.1,2) {};
        \node[smallvertex] (x1) at (4.3,-1.6) {};
        \node[bigvertex] (y) at (3.6,3.7) {};
        \node[smallvertex] (y1) at (2.6,3.85) {};
        \node[smallvertex] (y2) at (4.5,3.4) {};
        \node[smallvertex] (x3) at (-0.1,-1.9) {};
        \node[smallvertex] (x2) at (1.5,-2.1) {};

        \draw (x) .. controls (2.45,-0.8) .. (x1);
        \draw (x) .. controls (-0.8,-0.5) .. (x4);
        \draw (x) .. controls (-1.2,1) .. (x5);

        \draw[thick] plot[smooth cycle, tension=.7]
           coordinates {(-1,-1.2) (-1.4,-1.8) (-2.2,-1.4)
                        (-3.2,-0.6) (-2.6,-0.2) (-1.4,-0.8)};
        \draw [thick] plot[smooth cycle, tension=.7]
           coordinates {(-3,0.2) (-3.6,0.2) (-4,1.8)
                        (-3,2.8) (-2.4,2.4) (-3,1.2)};
        \draw[thick] plot[smooth cycle, tension=.7]
           coordinates {(2.9,-1.9) (3.4,-2.4)
                        (5,-1.8) (6.6,-0.2) (6,0.4)
                        (4.8,-0.6) (3.4,-1.3)};
        \draw[thick] plot[smooth cycle, tension=.7]
           coordinates {(4.4,3.8) (4.2,3.2) (5.5,2.3)
                        (6,1) (6.8,0.8) (6.8,2) (5.6,3.2)};
        \draw[thick] plot[smooth cycle, tension=.7]
           coordinates {(1,4.4) (-1,3.9)
                        (-2.2,3.3) (-1.8,2.8) (-0.4,3.4)
                        (1.4,3.6) (2.5,3.5) (3,4) (2.5,4.5)};
        \draw[thick] plot[smooth cycle, tension=.7]
           coordinates {(-0.6,-1.4) (-1,-2) (0,-2.2) (0,-1.6)};
        \draw[thick] plot[smooth cycle, tension=.7]
           coordinates {(1.2,-1.8) (1.2,-2.2) (2.2,-2.4) (2.2,-1.6)};
        \draw[dashed,color=blue,very thick] plot[smooth, tension=.7]
           coordinates {(-3.7,4) (-1.5,2.3) (2.9,0.6) (7.1,0.7)};

        \path (x) node[anchor=north] {\Large $x$};
        \path (y) node[anchor=south] {\Large $y$};
        \node at (5.4,-2.2) {\large $X_1$};
        \node at (-1,-2.5) {\large $X_3$};
        \node at (-2.8,-1.6) {\large $X_4$};
        \node at (1.8,-3) {\large $X_2$};
        \node at (-4.6,1.6) {\large $X_5$};
        \node at (-1,4.5) {\large $Y_1$};
        \node at (6.5,3.3) {\large $Y_2$};

        \draw (2.6,1.2) -- (2.2,0.1);
        \draw (4.4,1) -- (4.4,0);
        \draw (0.3,2.1) -- (-0.4,1.2);
        \node at (1.8,2) {\LARGE $\delta(X)$};

      \end{tikzpicture}
    \end{center}
    \caption{An arbitrary edge $xy \in T$.}
  \end{figure}
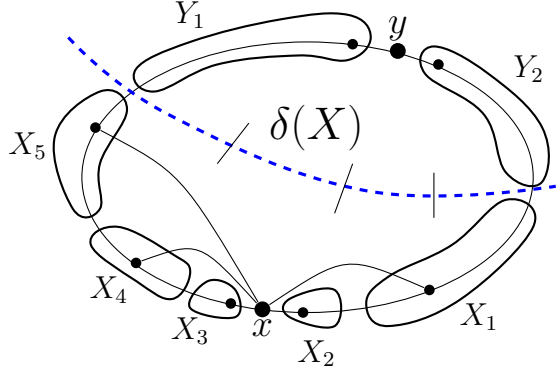

  By contradiction suppose $xy \notin E(G)$. By
  Claim~\ref{claim:xXiconnected}, there is an edge $e$ from $x$ to $Y$,
  let $m \in \{1,\ldots,l\}$ such that $e \in d(x,Y_m)$. If
  $m \notin \{1,l\}$, by outerplanarity either $d(y,Y_1)$ or $d(y,Y_l)$
  is empty; this contradicts Claim~\ref{claim:xXiconnected}. By symmetry
  we may assume $e \in d(x,Y_1)$. By a similar argument there is an edge
  $e' \in d(y,X_1)$. By Claim~\ref{claim:xXiconnected}, there are also
  two edges $e'' \in d(x,X_1)$ and $e''' \in d(y,Y_1)$.

  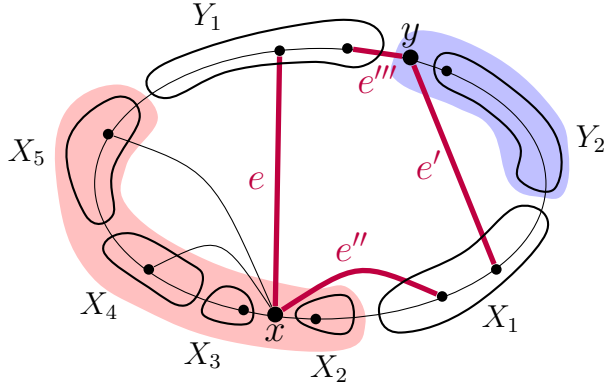
\begin{figure}
    \begin{center}
      \begin{tikzpicture}[x=0.6cm,y=0.6cm]

        \fill[color=red,nearly transparent] plot[smooth cycle, tension=.7]
           coordinates {(-2.9,3.1) (-2,2.4) (-2.7,1.3)
                        (-2.6,0.3) (-0.9,-0.8)
                        (1.2,-1.4) (2.2,-1.4) (2.6,-2.2)
                        (2,-2.8) (-0.4,-2.6)
                        (-2.3,-1.9) (-3.6,-1) (-4.2,0.6)
                        (-4.1,2.4)};
        \fill[color=blue, nearly transparent] plot[smooth cycle, tension=.7]
          coordinates {(3.8,4.3) (3.2,3.6) (3.4,3.1)
                       (4.6,2.6) (5.4,2.1)
                       (5.8,0.9) (6.5,0.6) (7,0.7)
                       (7.1,1.6) (6.8,2.5) (5.5,3.7)};

        \draw  (1.6,0.9) ellipse (5 and 3);
        \node[bigvertex] (x) at (0.6,-2) {};
        \node[bigvertex] (y) at (3.6,3.7) {};
        \node[smallvertex] (x5) at (-3.1,2) {};
        \node[smallvertex] (x4) at (-2.2,-1) {};
        \node[smallvertex] (x3) at (-0.1,-1.9) {};
        \node[smallvertex] (x2) at (1.5,-2.1) {};
        \node[smallvertex] (y1) at (2.2,3.9) {};
        \node[smallvertex] (x1) at (4.3,-1.6) {};
        \node[smallvertex] (y2) at (4.4,3.4) {};
        \node[smallvertex] (y1purple) at (0.7,3.85) {};
        \node[smallvertex] (x1purple) at (5.5,-1) {};

        \draw[line width =2,color=purple]
          (x) .. controls (2.45,-0.8) ..  (x1);
        \node[purple] at (2.3,-0.5) {\Large $e''$};
        \draw[line width =2,color=purple] (x) -- (y1purple)
          node[pos=0.5,left] {\Large $e$};
        \draw[line width =2,color=purple] (y) -- (x1purple)
          node[pos=0.5,left] {\Large $e'$};
        \draw[line width =2,color=purple] (y1) -- (y)
          node[pos=0.5,below] {\Large $e'''$};

        \draw (x) .. controls (-0.6,-0.2) .. (x4);
        \draw (x) .. controls (-0.8,1.2) .. (x5);

        \draw[thick] plot[smooth cycle, tension=.7]
           coordinates {(1,4.4) (-1,3.9)
                        (-2.2,3.3) (-1.8,2.8) (-0.4,3.4)
                        (1.4,3.6) (2.5,3.5) (3,4) (2.5,4.5)};
        \draw[thick] plot[smooth cycle, tension=.7]
           coordinates {(2.9,-1.9) (3.4,-2.4)
                        (5,-1.8) (6.6,-0.2) (6,0.4)
                        (4.8,-0.6) (3.4,-1.3)};
        \draw[thick] plot[smooth cycle, tension=.7]
          coordinates {(-1,-1.2) (-1.4,-1.8) (-2.2,-1.4)
                       (-3.2,-0.6) (-2.6,-0.2) (-1.4,-0.8)};
        \draw[thick] plot[smooth cycle, tension=.7]
          coordinates {(-3,0.2) (-3.6,0.2) (-4,1.8)
                       (-3,2.8) (-2.4,2.4) (-3,1.2)};
        \draw[thick] plot[smooth cycle, tension=.7]
          coordinates {(4.4,3.8) (4.2,3.2) (5.2,2.6)
                       (6,1.8) (6,1) (6.8,0.8)
                       (6.8,2) (5.6,3.2)};
        \draw[thick] plot[smooth cycle, tension=.7]
          coordinates {(-0.6,-1.4) (-1,-2) (0,-2.2) (0,-1.6)};
        \draw[thick] plot[smooth cycle, tension=.7]
          coordinates {(1.2,-1.8) (1.2,-2.2) (2.2,-2.4) (2.2,-1.6)};

        \path (x) node[anchor=north] {\Large $x$};
        \path (y) node[anchor=south] {\Large $y$};
        \node at (5.6,-2.1) {\large $X_1$};
        \node at (-1,-2.9) {\large $X_3$};
        \node at (-3.3,-1.8) {\large $X_4$};
        \node at (1.8,-3.2) {\large $X_2$};
        \node at (-4.9,1.6) {\large $X_5$};
        \node at (-0.9,4.6) {\large $Y_1$};
        \node at (7.6,1.9) {\large $Y_2$};

      \end{tikzpicture}
    \end{center}
    \caption{Showing that $xy \in T$ must be an edge of $G$.}
  \end{figure}

  Let $X' = \{x\} \cup X_2 \cup \ldots \cup X_k$ and
  $Y' = \{y\} \cup Y_2 \cup \ldots \cup Y_l$, $\delta(X')$ is a cut
  separating $x$ from $X_1$ and similarly $\delta(Y')$ separates $y$
  from $Y_1$. As $\delta(X_1)$ is the fundamental cut between $x$ and
  $X_1$, we have that $c(X_1) < c(X')$, and similarly
  $c(Y_1) < c(Y')$. Now, because of the edges $e, e', e'', e'''$, by
  outerplanarity there is no edge between $X'$ and $Y'$, hence
  $$c(X_1) + c(Y_1) = c(X') + c(Y') + 2c(X_1,Y_1) > c(X_1) + c(Y_1) + 2c(X_1, Y_1)$$
  a contradiction.
\end{proof}

\section{Which Instances have Gomory-Hu Subtrees?}
\label{sec:1sum}

The previous result leads to a characterization of graphs with the
{\em GH Property}: that is, graphs whose capacitated subgraphs always
contain a Gomory-Hu Tree as a subtree.  In Section~\ref{sec:last}, we
extend this result to the case where a subset of terminals is
specified.

We start with a simple observation that $\ktt$ does not have a GH
subtree.

\begin{repproposition}{prop:notree}
  $\ktt$, when all edges have capacity $1$, has no Gomory-Hu tree
  that is a subgraph of itself.
\end{repproposition}

\begin{proof}
  Let $\{u_1,u_2\}, \{v_1, v_2,v_3\}$ be the bipartition.  Since the
  minimum $u_1,u_2$ cut is of size $3$, a GH tree should contain a
  $u_1u_2$ path all of whose edges have capacity at least $3$. Suppose
  this path is $u_1v_1u_2$, then the tree's fundamental cut associated
  with $u_1v_1$ should be a minimum $u_1v_1$-cut. But this is impossible
  since $\delta(v_1)$ is a cut of size $2$.
\end{proof}

This leads to the desired characterization.

\begin{reptheorem}{thm:1sum}
$G$ has the GH Property if and only if $G$ is
  the $1$-sum of outerplanar and $K_4$ graphs.
\end{reptheorem}

\begin{proof}
  First suppose that $G$ is such a $1$-sum.  Each outerplanar block in
  this sum has the GH Property by Theorem~\ref{th:gh-subtrees}.  So
  consider a $K_4$ block and a subgraph $G'$ with edge capacities. If
  $G'$ is $K_4$, then clearly any GH tree is a subtree.  Otherwise
  $G'$ is a proper subgraph of $K_4$ and hence is outerplanar.  It
  follows that each block has the GH Property.  It is not hard to see
  that the $1$-sum of Gomory-Hu trees of two graphs is a Gomory-Hu
  tree of the $1$-sum of the graphs. Repeating this argument to the
  blocks we find that $G$ itself satisfies the GH property.

  Suppose now that a $2$-connected graph $G$ has the GH property.  By
  our proposition, $G$ has no $\ktt$ minor.  Outerplanar graphs are
  graphs with forbidden minors $\ktt$ and $K_4$. Hence if $G$ is
  not outerplanar, then it has a $K_4$ minor. Notice that any proper
  subdivision of $K_4$ contains a $\ktt$, as well as any graph
  built from $K_4$ by adding a path between two distinct
  vertices. Hence $G$ must be $K_4$ itself.  The result now follows.
\end{proof}

\section{Characterization of terminal-$\ktt$ free graphs}
\label{sec:k23char}

In this section we prove Theorem~\ref{thm:k23char}. Throughout, we
assume we have an undirected graph $G$ with terminals
$Z \subseteq V(G)$.  We refer to $G$ as being $H$-terminal free (for
some $H$) to mean with respect to this fixed terminal set $Z$.

We first check sufficiency of the condition of
Theorem~\ref{thm:k23char}.  Any graph with at most $4$ terminals is
automatically terminal-$\ktt$ free and one easily checks that any
extended Okamura-Seymour instance cannot contain a terminal-$\ktt$
minor.  Hence we focus on proving the other direction: any
terminal-$\ktt$ minor-free graph $G$ lies in the desired class. To
this end, we assume that $|Z| \geq 5$ and we ultimately derive that
$G$ must be an extended OS instance.

We start by excluding the existence of certain $K_4$ minors.

\begin{proposition}
  If $|Z| \geq 5$ and $G$ has a terminal-$K_4$ minor, then $G$ has a
  terminal-$\ktt$ minor.
\end{proposition}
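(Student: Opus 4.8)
The plan is to reduce everything to a single elementary observation, which I will call the \emph{hub principle}. Suppose $B_1,B_2,B_3,B_4$ are the branch sets of a terminal-$K_4$ minor (pairwise joined by an edge, each a connected set containing a terminal) and $D$ is any further connected set that is disjoint from all four, contains a terminal, and is joined by an edge to at least two of them, say $B_1$ and $B_2$. Then a terminal-$\ktt$ minor exists: take $\{B_1,B_2\}$ as the two vertices of degree three and $\{B_3,B_4,D\}$ as the three vertices of degree two. The four adjacencies $B_iB_j$ with $i\in\{1,2\}$, $j\in\{3,4\}$ are supplied by the $K_4$, and $B_1D,B_2D$ are given; the five sets are disjoint, connected, and terminal-bearing, so they form a terminal-$\ktt$ minor. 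The whole proof thus reduces to producing such a set $D$ out of the presence of a fifth terminal.

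Fix a terminal-$K_4$ model with branch sets $B_1,\dots,B_4$, put $W=B_1\cup\cdots\cup B_4$, and choose one representative terminal in each $B_i$. Since $|Z|\ge 5$, there is a terminal $z$ that either lies outside $W$ or lies in some $B_i$ in addition to that block's representative, and I treat these as the two cases below. I work under the assumption that $G$ is $2$-connected, as in Theorem~\ref{thm:k23char}; this hypothesis is genuinely needed, since a star whose centre is a cut vertex carries a terminal-$K_4$ minor on five terminals yet no terminal-$\ktt$ minor.

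In the first case some terminal $z$ lies outside $W$. Let $K$ be the component of $G-W$ containing $z$. If $K$ sends an edge to two distinct branch sets, then $D:=K$ satisfies the hub principle and we are done. Otherwise every edge from $K$ to $W$ enters a single block $B_i$; then $B_i\cup K$ is connected, disjoint from the other blocks, and still adjacent to each of them, so it is a valid $K_4$ branch set, and after this absorption $B_i$ contains two terminals. By $2$-connectivity $K$ has at least two neighbours in $W$, but they may all lie in one $B_i$, which is precisely the absorbed subcase. Either way we are reduced to the second case.

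In the second case some branch set, say $B_1$, contains two terminals $s,t$, with attachment vertices $p_2,p_3,p_4\in B_1$ joined respectively to $B_2,B_3,B_4$. I would split $B_1$ into two connected sets $A\ni s$ and $B\ni t$ and apply the hub principle to $A,B,B_2,B_3,B_4$: one checks that \emph{any} connected bipartition of $B_1$ separating $s$ from $t$ in which the three attachments do not all land on the same side yields a terminal-$\ktt$ minor, the edge between $A$ and $B$ being automatic and the hub/spoke roles being read off from how $p_2,p_3,p_4$ split. The main obstacle is to guarantee such a balanced split. It can fail only when every $s$--$t$ separation funnels all three attachments to one side, i.e.\ through a single vertex $q$; the crux is that then the deficient side must reach outside $B_1$, for otherwise $q$ would be a cut vertex of $G$, contradicting $2$-connectivity. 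Exploiting this, I would let $A$ or $B$ grow out through $G-W$ and re-enter at a second branch set, rerouting one attachment to the deficient side; the two internally disjoint detours furnished by $2$-connectivity are exactly what break the funnel and complete the construction.
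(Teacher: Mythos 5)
Your overall skeleton is sound and in fact parallels the paper's proof: your ``hub principle'' is correct (taking $B_1,B_2$ as the degree-$3$ branch sets and $B_3,B_4,D$ as the degree-$2$ ones does yield a terminal-\ktt{} minor) and is exactly the paper's opening observation in disguise --- a set $D$ attached to two branch sets is a model of $K_4^+$, the graph obtained from $K_4$ by subdividing an edge, and $K_4^+$ minus the edge opposite the subdivided one is \ktt. Your Case 1 (absorb a component of $G-W$ that attaches to only one branch set, reducing to a branch set with two terminals) is clean and correct, and your verification that any connected $s$--$t$ bipartition of $B_1$ splitting the attachments $2$--$1$ gives a terminal-\ktt{} (hubs: the side with two attachments and the branch set attached to the other side) checks out. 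Your remark that $2$-connectivity is genuinely needed is right in substance, though your counterexample is garbled: a star has no $K_4$ minor at all; take instead $K_4$ on four terminals with a fifth terminal attached as a pendant, which has a terminal-$K_4$ minor but no terminal-\ktt{} minor.

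The gap is in your final ``funnel'' case, which is precisely where the paper's proof does its hardest work (the case $a,b\in R:=P[s,u]\cap P[s,v]\cap P[s,t]$), and your treatment of it is a plan rather than a proof. Two concrete problems. First, the dichotomy ``no balanced split exists, hence all three attachments are funnelled to one side through a single vertex $q$'' is asserted without argument; failure a priori only says every connected bipartition is $3$--$0$, and extracting a single separating vertex $q$ needs a Menger-type claim you never make (attachments may coincide with $s$ or $t$, or hang off the $s$--$t$ path in ways that do not obviously produce one $q$). Second, even granting the funnel vertex, the detour from the deficient side avoiding $q$ need not reach a second branch set: it may leave $B_1$ and re-enter $B_1$ on the far side of $q$, e.g.\ through a component of $G-W$ whose entire neighbourhood lies in $B_1$. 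Then you gain no new adjacency, the funnel can simply relocate inside the enlarged branch set, and your ``grow out and reroute'' has no termination mechanism. The paper forestalls exactly this by fixing a terminal-$K_4$ model minimizing $|U|$ at the outset and then running an explicit two-path argument (openly disjoint paths $Q_1,Q_2$ from $s$ to $U'$, rerouting one along the stem $R$ so it ends at $z$, then following the path from $w$ until it first meets $Q_1\cup Q_2$), so that every outcome either contradicts minimality or lands back in an earlier case. Without some such global anchor --- a minimal model or an induction quantity --- your last step does not close, and the proposal as written is incomplete.
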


\begin{proof}
  Let $K_4^+$ be the graph obtained from $K_4$ by subdividing one of
  its edges. By removing the edge opposite to the subdivided edge, we
  see that $K_4^+$ contains $\ktt$. Hence it suffices to prove that
  $G$ contains a terminal-$K_4^+$ minor.

  Consider a terminal-$K_4$ minor on terminals
  $T' = \{s,t,u,v\}$. Thus we have vertex-disjoint trees $T_x$ for
  each terminal $x \in T'$, such that for any $x,y \in T'$, there is
  an edge $e_{xy}$ having one extremity in $T_x$ and one in $T_y$. We
  may assume that $T_x = \bigcup_{y \in T' \setminus \{x\}} P[x,y]$,
  where $P[x,y]$ is a path from $x$ to an end of $e_{xy}$ and not
  containing $e_{xy}$. Denote $U := \bigcup_{x \in T'} V(T_x)$. Assume
  that the minor has been chosen so that $|U|$ is minimized.

  As $|T| \geq 5 > |T'|$, there is some terminal $w \not\in T'$.
  First, suppose that $w$ is contained in one of the subtrees, say
  $T_s$, of the representation of the terminal-$K_4$ minor. Note that
  $w$ could not lie in all three of the paths $P[s,u]$, $P[s,v]$,
  $P[s,t]$ since we could obtain a smaller terminal-$K_4$ minor by
  replacing $s$ by $w$. If $w$ lies in exactly one of these paths, say
  $P[s,u]$, then we obtain a terminal-$K_4^+$ minor where $w$ is the
  terminal which subdivides the minor edge $su$. The last case is
  where $w$ lies in exactly $2$ of the paths, say $P[s,u],P[s,v]$. In
  this case, one may replace $s$ and use $w$ as the degree $3$ vertex
  of the $K_4$ minor and hence $s$ can play the role of the degree $2$
  vertex in a terminal-$K_4^+$ minor, a contradiction.

  Now we assume that $w$ is not contained in any of the subtrees.
  Then by $2$-connectivity, there are two disjoint paths from $w$ to
  two vertices $a$ and $b$ in $U$. If $a$ and $b$ are in different
  subtrees, we easily get a terminal-$K_4^+$ minor, a contradiction.
  Assume $a,b \in T_s$. Now suppose that one of $a$ and $b$ is in
  exactly one of $P[s,u]$, $P[s,v]$, $P[s,t]$, say $P[s,u]$. If both
  $a,b$ have this property, then choose the one which is closer to the
  edge $e_{su}$. Let this be $a$. Then by contracting the $wa$-path we
  get a terminal-$K_4^+$ minor (where $w$ has degree $2$).  Next
  assume that $a$ lies in precisely two of these three paths, say
  $P[s,u],P[s,v]$. Then we also get a terminal-$K_4^+$ where $s$ is now a
  degree $2$ terminal vertex on a subdivided edge between $w$ and $t$.

  In the last case we may assume that $a, b \in R$ where
  $R := P[s,u] \cap P[s,v] \cap P[s,t]$. Let $z$ be the end of $R$
  that is not $s$, and denote $U' := U \setminus (V(R) \setminus z)$.
  Let $Q_1,Q_2$ be openly vertex-disjoint paths from $s$ to
  $U'$. Without loss of generality $Q_1$ contains a vertex $z' \in R$
  which is closest to $z$ amongst all vertices in $Q_1 \cup Q_2$. Hence
  replacing $Q_1$ by the path which follows the subpath of $R$ from
  $z'$ to $z$ also produces a vertex-disjoint pair of paths. Hence we
  assume the endpoints of $Q_1$ are $s$ and $z$. Now consider
  following one of the paths from $w$ until it first hits a vertex of
  $Q_1 \cup Q_2$.  If there is no such vertex, then it hits $R$ and we
  may follow $R$ until it hits $Q_1 \cup Q_2$. In all cases this
  produces a path from $w$ to $U'$ which is disjoint from exactly one
  of $Q_1,Q_2$. Let $P',Q'$ be the resulting vertex-disjoint paths and
  note that one of them terminates at $z$; without loss of generality
  $P'$. Hence its other endpoint can now play the role of $s$ as a
  degree $3$ vertex in a terminal-$K_4$ minor. Therefore, the terminal
  on $Q'$ has a path to $U'$ which is disjoint from $P'$. Thus we are
  back to one of the previous cases.
\end{proof}

Now we have ruled out the existence of terminal-$K_4$ minors, we start
building up minors which can be possible.

\begin{proposition}\label{prop:2conn-terminal-minor}
  Any $2$-connected graph with terminals $Z$, with $|Z| \geq 3$, has a
   $2$-connected minor $H$ with $V(H)=Z$.
\end{proposition}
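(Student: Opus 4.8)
The plan is to argue by minimality. Among all minors $H$ of $G$ that are $2$-connected and satisfy $Z \subseteq V(H)$, I would choose one with $|V(H)|$ minimum; such a minor exists because $G$ itself qualifies and $G$ is finite, and I may assume $H$ is simple. The goal is to show that this minimal $H$ has $V(H) = Z$, which is exactly the claim. Suppose not, so there is a non-terminal vertex $v \in V(H) \setminus Z$, and note $|V(H)| \geq |Z| + 1 \geq 4$.

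The key step is to show that one can always contract an edge incident to $v$ while staying $2$-connected. First I would record the following characterization: for $e = vu$ with $H$ a simple $2$-connected graph on at least $4$ vertices, the contraction $H/vu$ is $2$-connected if and only if $\{v,u\}$ is not a $2$-vertex-cut of $H$. Indeed, writing $x$ for the contracted vertex, any other vertex $y \neq x$ fails to be a cut vertex of $H/vu$ because $(H/vu) - y$ is obtained from the connected graph $H - y$ by contracting an edge, while $x$ is a cut vertex of $H/vu$ exactly when $H - v - u$ is disconnected, i.e. when $\{v,u\}$ is a $2$-cut. Parallel edges created by the contraction are irrelevant to $2$-connectivity once there are at least three vertices.

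Next I would locate a neighbour $u$ of $v$ for which $\{v,u\}$ is not a $2$-cut, equivalently a neighbour that is not a cut vertex of $H - v$. If $H - v$ is itself $2$-connected this is immediate, since then $H-v$ has no cut vertices at all. Otherwise I would consider the block--cut-vertex tree of the connected graph $H - v$ and pick any end-block $B$ with its unique cut vertex $c$. Because $H$ is $2$-connected, $c$ is not a cut vertex of $H$, so $v$ must have a neighbour $w$ in $V(B) \setminus \{c\}$; otherwise $c$ would separate $V(B) \setminus \{c\}$ from the rest of $H$. Such a $w$ lies in the single block $B$ of $H - v$, hence is not a cut vertex of $H-v$, so $\{v,w\}$ is not a $2$-cut and $H/vw$ is $2$-connected by the characterization.

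Finally I would contract $vw$. The result $H'$ is a $2$-connected minor of $G$ (minors are closed under contraction), and since $v \notin Z$ and no two terminals are identified, we still have $Z \subseteq V(H')$, while $|V(H')| = |V(H)| - 1$; this contradicts the minimality of $H$. Hence $V(H) = Z$, as required. I expect the contractibility step to be the main obstacle: a non-terminal vertex of degree at least $3$ can neither be deleted nor have an arbitrary incident edge contracted without possibly destroying $2$-connectivity (as already happens for a diagonal of $K_4$ minus an edge), so the crux is the end-block argument that pinpoints a safe edge to contract.
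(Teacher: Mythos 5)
Your proof is correct. The contraction characterization (for a simple $2$-connected $H$ on at least four vertices, $H/vu$ is $2$-connected iff $H-v-u$ is connected), the end-block argument producing a neighbour $w$ of $v$ that is not a cut vertex of $H-v$ (else the attaching cut vertex $c$ of the end block would be a cut vertex of $H$), and the observation that contracting $vw$ keeps all of $Z$ since $v \notin Z$ are all sound, and together they contradict your vertex-minimal choice of $H$. The paper takes a genuinely different route: it chooses $H$ minimal as a \emph{minor} (under both deletion and contraction of edges) and fixes one edge $sv$ with $v \notin Z$; minimality makes this edge simultaneously contraction-critical and deletion-critical, and the proof shows these two criticalities are incompatible. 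Concretely, contraction-criticality forces $\{s,v\}$ to be a $2$-cut separating some $t,t'$, whence Menger yields a cycle through $s,t,v,t'$ in that cyclic order; deletion-criticality yields a cut vertex $z$ of $H-sv$ separating $s$ from $v$, which is impossible because that cycle avoids the edge $sv$ and supplies two disjoint $sv$-paths. So the paper never needs to exhibit a contractible edge at $v$. Your argument instead establishes the stronger local fact that in a $2$-connected graph on at least four vertices \emph{every} vertex is incident to a contractible edge (a classical lemma via the block--cut tree of $H-v$), which buys you a proof using contraction-minimality alone, with no appeal to edge deletion; the paper's version is a bit shorter but leans on minimality with respect to both operations. (Incidentally, the paper's phrase ``we may assume there is an edge $sv$ with $s \in Z$'' is an inessential slip --- $s$ need not be a terminal there, and neither proof requires it.)
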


\begin{proof}
  Let $H$ be a minimal $2$-connected terminal-minor of $G$ containing
  $Z$ and assume there is a non-terminal vertex in $H$. In particular
  we may assume there is an edge $sv$ with $s \in Z$, $v \notin Z$. By
  minimality, contracting $sv$ decreases the connectivity
  to $1$. Hence,  $\{s,v\}$ is a cut separating two
  vertices $t$ and $t'$. Thus, there are two disjoint $tt'$-paths, one
  containing $s$ and the other $v$. That is, there is a circuit $C$
  containing $s,t,v,t'$ in that order.

  By minimality of $H$, we also have that $H-sv$ is not
  $2$-connected. It follows that $H-sv$ contains a cut vertex $\{z\}$
  where $s,v$ lie in distinct components of $H-sv-z$.  This would
  contradict the existence of $C$, and this completes the proof.
\end{proof}

As $|Z| \geq 5$, the previous lemma implies that there is a
terminal-$C_4$ minor. Let $k$ be maximum such that $G$ contains a
terminal-$C_k$ minor.

\begin{proposition}
  $k = |Z|$.
\end{proposition}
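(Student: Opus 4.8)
The plan is to produce a terminal-$C_{|Z|}$ minor outright. Since any terminal-$C_k$ minor requires $k$ pairwise distinct terminals we trivially have $k \le |Z|$, so exhibiting a cycle minor through all of $Z$ forces $k = |Z|$. The natural object to work with is the $2$-connected terminal minor $H$ with $V(H) = Z$ furnished by Proposition~\ref{prop:2conn-terminal-minor} (applicable as $|Z| \ge 5 \ge 3$). I would fix a minor model $\{W_z : z \in Z\}$ of $H$ in $G$, where each $W_z$ is a connected branch set with $z \in W_z \cap Z$.

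The crux is to show that $H$ contains \emph{no} $K_4$ minor and \emph{no} $\ktt$ minor (as an abstract graph), so that by the forbidden-minor characterization already recalled in the paper, $H$ is outerplanar. Both claims follow by a minor-of-a-minor lifting argument: given branch sets $A_1,\ldots,A_m \subseteq V(H)=Z$ realizing a $\ktt$ (resp.\ $K_4$) minor of $H$, the sets $\widehat{A_i} := \bigcup_{z \in A_i} W_z$ are connected in $G$ and adjacencies of $H$ lift to edges of $G$, so they realize the same minor in $G$; moreover each $\widehat{A_i}$ contains a terminal, so the lifted minor is a terminal minor. A lifted terminal-$\ktt$ minor contradicts the hypothesis that $G$ is terminal-$\ktt$ free, while a lifted terminal-$K_4$ minor yields a terminal-$\ktt$ minor by the preceding proposition of this section (whose hypothesis $|Z|\ge5$ holds), again a contradiction. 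Hence $H$ is a $2$-connected outerplanar graph, and the boundary of its outer face is a Hamiltonian cycle through every vertex, i.e.\ through all of $Z$. Contracting this cycle's structure back through the $W_z$ gives a terminal-$C_{|Z|}$ minor of $G$, so $k = |Z|$.

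The step I expect to require the most care is the lifting: one must check that a $\ktt$- or $K_4$-minor of $H$ genuinely lifts to a \emph{terminal} minor of the \emph{same} type in $G$, i.e.\ that unions of branch sets stay connected and that every combined branch set still captures a terminal; this is where the fact $V(H)=Z$ is used decisively. The remaining ingredients are standard: the outerplanar obstruction set $\{K_4,\ktt\}$ is exactly the one quoted in the proof of Theorem~\ref{thm:1sum}, and the Hamiltonicity of $2$-connected outerplanar graphs is classical. An alternative, more hands-on route avoids $H$ entirely: take a maximum terminal-$C_k$ minor with branch sets $B_1,\dots,B_k$ of minimum total size, pick a terminal $w$ not separated out by the cycle, and use $2$-connectivity (the fan lemma) to attach $w$ to the cyclic structure by two disjoint paths; if the two attachment branch sets are cyclically adjacent one reroutes through $w$ to get $C_{k+1}$, and if they are non-adjacent the two cycle-arcs between them together with the $w$-chord contract to a terminal-$\ktt$ minor (its two hubs being the attachment branch sets, which already carry terminals). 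That argument is workable but the degenerate cases—$w$ interior to a branch set, or both paths landing in one branch set—are fiddly, so I would prefer the outerplanar route above.
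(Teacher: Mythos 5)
Your proof is correct, but it reaches the Hamiltonian cycle in $H$ by a different route than the paper. Both arguments begin exactly as you do: take the $2$-connected minor $H$ with $V(H)=Z$ from Proposition~\ref{prop:2conn-terminal-minor}, and use (implicitly, in the paper's case) the lifting principle you spell out — since every vertex of $H$ is a terminal, any minor of $H$ lifts to a terminal minor of $G$ of the same type. The paper then argues directly on $H$: fix an ear decomposition starting from a \emph{longest} cycle $C_0$; a nontrivial ear with consecutive endpoints on $C_0$ would lengthen the cycle, while one with non-consecutive endpoints forms, together with the two arcs of $C_0$, a subdivision of $\ktt$ whose branch and subdivision vertices are all terminals — contradiction. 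Hence every ear is a single edge and $C_0$ is Hamiltonian. You instead exclude both $\ktt$ and $K_4$ minors of $H$ (the latter via the preceding proposition, whose hypothesis $|Z|\geq 5$ you correctly check), invoke the classical obstruction characterization of outerplanarity (already quoted in the proof of Theorem~\ref{thm:1sum}), and conclude via Hamiltonicity of $2$-connected outerplanar graphs. The trade-off: the paper's ear-decomposition argument is self-contained and never needs the $K_4$ exclusion at all — chords of the Hamiltonian cycle are harmless — whereas your route buys a stronger structural conclusion ($H$ is outerplanar, not merely Hamiltonian) at the cost of two classical black boxes, and it has the merit of making explicit the minor-lifting step that the paper uses silently. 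Your alternative ``hands-on'' sketch at the end is indeed closer in spirit to the paper's case analysis elsewhere in the section, and its degenerate cases are exactly the fiddly ones you identify; you were right to prefer the outerplanar route.
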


\begin{proof}
  By Proposition~\ref{prop:2conn-terminal-minor}, let $H$ be a
  2-connected terminal-minor of $G$ with $V(H)=T$. Consider an
  ear-decomposition of $H$, starting with longest cycle $C_0$ and ears
  $P_1,\ldots, P_k$. Then all ears are single edges (from which the
  proposition follows), otherwise let $P_i$ be an ear that is not a
  single edge, with $i$ minimum. The two ends of $P_i$ are vertices
  $x,y$ of $C_0$. If $x$ and $y$ are consecutive in $C_0$, this
  contradicts the maximality of $C_0$. If they are not consecutive,
  $C_o \cup P_i$ is a subdivision of $\ktt$.

\end{proof}

We let $k=|Z|$ henceforth.  A terminal-$C_k$ minor of $G$ can also be
represented as a collection of $k$ vertex-disjoint subtrees
$T_1, \ldots,T_k$, where each $T_i$ contains exactly one terminal
$t_i$. There also exist edges $e_1,\ldots, e_k$, where $e_i$ has one
extremity $u_i$ in $T_i$ and the other, $v_{i+1}$, in $T_{i+1}$. The
subscript $k+1$ is taken to be $1$; the edges in the subtrees are the
contracted edges and the edges $e_1,\ldots,e_k$ are the undeleted
edges.  We define $s_i$ as the only vertex in
$V(P[t_i,u_i]) \cap V(P[u_i,v_i]) \cap V(P[v_i,t_i])$, where
$V(P[x,y])$ is the vertex set of the path with ends $x$ and $y$ in the
tree $T_i$. Thus, $T_i$ is
$P[s_i,u_i] \cup P[s_i,v_i] \cup P[s_i,t_i]$.


We denote by $S_i$ the path from $t_i$ to $s_i$ in $T_i$ and we take
our representation so that $\sum_{i=1}^k |S_i|$ minimized. We denote
by $P_i$ the path from $s_i$ to $s_{i+1}$.

\begin{proposition}
$\sum_{i=1}^k |S_i| = 0$.
\end{proposition}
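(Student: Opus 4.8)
The plan is to argue by contradiction from the minimality of $\sum_{i=1}^k |S_i|$. Suppose some $S_i$ is nontrivial, so that $s_i \neq t_i$; since $T_i$ contains only the terminal $t_i$, the vertex $s_i$ is a non-terminal. Let $C = P_1 \cup P_2 \cup \cdots \cup P_k$ be the ``spine'' cycle through all the centers $s_1,\ldots,s_k$, so that each terminal $t_j$ hangs off $C$ at $s_j$ along the pendant path $S_j$ (of length $0$ exactly when $t_j = s_j$). Write $W = C \cup \bigcup_j S_j$ for the whole current representation.

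First I would produce a single extra path off $t_i$. Since $G$ is $2$-connected, $s_i$ is not a cut vertex, so $C - s_i$ (a path) and $t_i$ lie in the same component of $G - s_i$. Choose a path $R$ in $G - s_i$ from $t_i$ to $C - s_i$; let $p$ be its first vertex on $C$, let $r$ be the last vertex of $R$ lying on $S_i$ before $p$, and set $Q = R[r,p]$, so that the interior of $Q$ avoids $W$ and $p \neq s_i$. The ``ear'' consisting of the path from $s_i$ to $r$ along $S_i$ followed by $Q$ attaches to $C$ at the two points $s_i$ and $p$. (If $R$ first meets $W$ on another pendant $S_j$ rather than on $C$, I would contract $S_j$ onto $s_j$ and take $p = s_j$; the terminal $t_j$ then sits at $p$, which only helps in the last case below.)

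The heart of the argument is a trichotomy. If $r \neq t_i$, then rerouting the spine to run along the ear from $s_i$ through $r$ to $p$ makes $r$ a spine vertex, so $t_i$ now hangs off $C$ along the strictly shorter pendant $S_i[r,t_i]$; declaring $r$ the new center leaves all other pendants untouched and yields a terminal-$C_k$ minor with strictly smaller $\sum_j |S_j|$, contradicting minimality. If $r = t_i$, the ear passes through $t_i$ in its interior, and I split on the location of $p$. When $p$ lies on one of the two spine arcs $P_{i-1}$ or $P_i$ incident to $s_i$ (equivalently, one of the two $s_i$--$p$ arcs of $C$ carries no center), I again reroute the spine through $t_i$, obtaining a representation in which $S_i$ is trivial while no other pendant grows, again contradicting minimality. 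Otherwise both $s_i$--$p$ arcs of $C$ carry centers, and I build a terminal-\ktt\ minor: the three internally disjoint $s_i$--$p$ curves are the two spine arcs together with the ear through $t_i$. After contracting each pendant $S_j$ onto $C$, the ear supplies the terminal $t_i$, while the $k-1 \geq 4$ remaining terminals lie on the two spine arcs; since neither arc is center-free, one checks that two of these terminals can be absorbed to make $s_i$ and $p$ into terminal hubs with each arc still retaining an interior terminal, exhibiting \ktt\ as a terminal minor and contradicting the hypothesis.

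I expect the main obstacle to be the bookkeeping in this last case: guaranteeing that both hubs can be made terminals \emph{and} that each of the two spine arcs keeps an interior terminal, all simultaneously and disjointly. This is exactly where $|Z| = k \geq 5$ is used, through the count $|A| + |B| = k-1 \geq 4$ of terminals on the two arcs $A$ and $B$; the only distributions for which the allocation fails are those with an empty arc, which are precisely the adjacent-arc cases already disposed of by rerouting. A secondary point requiring care is checking that each reroute genuinely returns a terminal-$C_k$ minor --- that the trees remain vertex-disjoint, each still contains exactly one terminal, and the contraction still gives $C_k$ --- and that discarding the now-redundant non-terminal spine segment does not lengthen any other pendant.
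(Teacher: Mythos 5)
Your proposal follows the same skeleton as the paper's proof: assume some pendant $S_i$ is nontrivial, use $2$-connectivity to produce an extra attachment from $t_i$ to the rest of the representation, and then split according to where the attachment lands --- reroute to shorten $\sum_j |S_j|$ when it lands on an arc adjacent to $s_i$, and build a terminal-\ktt\ (using $k \geq 5$) when it lands far away. The differences are mostly cosmetic: the paper invokes $2$-connectivity in the fan/Menger form, taking \emph{two} paths from $t_1$ to $D = P_1 \cup \cdots \cup P_k \cup S_2 \cup \cdots \cup S_k$ that meet only at $t_1$, and then normalizes one of them to end at $s_1$; this puts it permanently in your $r = t_i$ situation. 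Its \ktt\ uses two terminal trees (e.g.\ $T_i$ and $T_k$) as hubs with $t_1, t_2, t_{k-1}$ as degree-$2$ terminals, whereas you use the non-terminals $s_i$ and $p$ as hubs and absorb adjacent terminals into them; your counting argument ($|A|+|B| = k-1 \geq 4$, failing only when an arc is center-free) is correct and is the same use of $|Z| \geq 5$.

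However, your first branch of the trichotomy contains a genuine error. When $r \neq t_i$ you reroute unconditionally, but ``rerouting the spine to run along the ear'' forces you to delete one of the two $s_i$--$p$ arcs of $C$, and you only verify that such an arc is center-free in the $r = t_i$ branches. If $r \neq t_i$ and \emph{both} arcs carry centers (say a chord runs from an internal vertex of $S_i$ to a vertex roughly opposite $s_i$ on $C$), the deleted arc carries other terminals' attachment points, so the result is not a terminal-$C_k$ minor at all, and the claim that ``all other pendants are untouched'' fails. The fix stays entirely within your framework: branch first on the position of $p$, not on $r$. If some $s_i$--$p$ arc is center-free, reroute; the new pendant is $S_i[r,t_i]$, strictly shorter since $R$ avoids $s_i$. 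If both arcs carry centers, your \ktt\ construction works verbatim even with $r \neq t_i$: take the ear's degree-$2$ branch set to be the interior of the ear together with the remnant $S_i[r,t_i]$, which is connected and contains the terminal $t_i$. (Alternatively, adopt the paper's device of two openly disjoint paths from $t_i$, which eliminates the $r \neq t_i$ case at the outset.) Your parenthetical treatment of $R$ first meeting another pendant $S_j$ is loosely worded --- one should say that the hub at $p = s_j$ takes all of $S_j$, including $t_j$, as its branch set --- but your remark that this ``only helps'' is correct, since that hub then needs no absorbed arc terminal.
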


\begin{proof}
  By contradiction, suppose $|S_1| > 0$ and so $t_1$ does not lie in the
  graph induced by
  $D=P_1 \cup \ldots \cup P_k \cup S_2 \cup \ldots \cup S_k$.  By
  $2$-connectivity, there are two disjoint minimal paths from $t_1$
  to distinct vertices $x$ and $y$ in $D$.  Moreover we can assume that $x=s_1$
  lies on $P_k \cup P_1$. To see this, suppose that $z \in S_1$ is the closest
  vertex to $s_1$ which is used by the one of the paths (possibly $z=t_1$).
  We may then re-route one of the paths to use the subpath of $S_1$ from $z$ to $s_1$.

  If $y$ is contained in one of $P_k, P_1$, it is routine to
  get another representation of the minor where all the $S_i$ are at
  least as short, and $S_1$ is empty, contradicting the minimality of
  our choice of representation. A similar argument holds if $y \in S_k \cup S_2$.

  So we assume $y \in D \setminus (P_k \cup P_1 \cup S_k \cup S_2)$.
  We now find a terminal-$\ktt$ minor, and that is again a
  contradiction. To see this, let $T_i$ be a tree which contains the
  second vertex $y$. As $k \geq 5$, we may assume either
  $i \in [4,k-1]$, or $i \in [3,k-2]$. Suppose the latter as the two
  cases are similar. We obtain a terminal-$\ktt$ where the two
  degree-3 vertices correspond to the terminals in $T_i$ and
  $T_k$. The degree-2 vertices will correspond to $t_1,t_2$ and
  $t_{k-1}$ --- see Figure~\ref{fig:k23-minor-in-sunny-graph}.
\end{proof}

Hence there is a circuit $C$ containing every terminal, in cyclic
order $t_1$, $t_2$, \ldots $t_k$.

\begin{figure}
  \begin{center}
    \begin{tabular}{cp{1cm}c}
      \begin{tikzpicture}[x=0.7cm,y=0.7cm]
        \draw[line cap=round,line width=0.2cm,color=LightPink]
          (18:2) arc[start angle=18,delta angle=36,radius=1.4cm] (54:2)
          (90:2) arc[start angle=90,delta angle=72,radius=1.4cm] (162:2)
          (162:2) arc[start angle=162,delta angle=72,radius=1.4cm] (234:2)
          (234:2) arc[start angle=234,delta angle=72,radius=1.4cm] (306:2)
          (306:2) arc[start angle=306,delta angle=72,radius=1.4cm] (18:2)
          (18:2) -- (18:3)
          (90:2) -- (0,0)
          (0,0) -- (54:2)
          (162:2) -- (162:3)
          (234:2) -- (234:3)
          (306:2) -- (306:3);
        \draw (0,0) circle[radius=1.4cm];
        \node[smallvertex] (x) at (54:2) {};
        \foreach \theta in {18,90,...,306} {
          \node[smallvertex] (s\theta) at (\theta:2) {};
        }
        \node[terminal] (t90) at (0,0) {};
        \foreach \theta/\i in {18/2,162/k,234/{k-1},306/3} {
          \node[terminal] (t\theta) at (\theta:3) {};
          \path (t\theta) node[anchor=north] {$t_{\i}$};
          \draw (t\theta) -- (s\theta);
        }
        \draw (s90) -- (t90) (t90) -- (x);
        \draw (s90) node[anchor=south] {$s_1$};
        \draw (t90) node[anchor=east] {$t_1$};
        \draw (x) node[anchor=south west] {$y$};
      \end{tikzpicture}
      & &
      \begin{tikzpicture}[x=0.7cm,y=0.7cm]
        \draw[line cap=round,line width=0.2cm,color=LightPink]
          (90:2) arc[start angle=90,delta angle=72,radius=1.4cm] (162:2)
          (162:2) arc[start angle=162,delta angle=72,radius=1.4cm] (234:2)
          (234:2) arc[start angle=234,delta angle=72,radius=1.4cm] (306:2)
          (306:2) arc[start angle=306,delta angle=72,radius=1.4cm] (18:2)
          (18:2) -- (18:3)
          (90:2) -- (90:3)
          (162:2) -- (162:3)
          (234:2) -- (234:3)
          (306:2) -- (306:3)
          (90:3) to[out=330,in=120] (18:2.5);
        \draw (0,0) circle[radius=1.4cm];
        \node[smallvertex] (x) at (18:2.5) {};
        \foreach \theta in {18,90,...,306} {
          \node[smallvertex] (s\theta) at (\theta:2) {};
        }
        \node[terminal] (t90) at (90:3) {};
        \foreach \theta/\i in {18/2,162/k,234/{k-1},306/3} {
          \node[terminal] (t\theta) at (\theta:3) {};
          \path (t\theta) node[anchor=north] {$t_{\i}$};
          \draw (t\theta) -- (s\theta);
        }
        \draw (s90) -- (t90);
        \draw (90:3) to[out=330,in=120] (18:2.5);
        \draw (s90) node[anchor=north] {$s_1$};
        \draw (t90) node[anchor=east] {$t_1$};
        \draw (x) node[anchor=north] {$y$};
      \end{tikzpicture}
      \\
      \begin{tikzpicture}[x=0.7cm,y=0.7cm]
        \draw[line cap=round,line width=0.2cm,color=LightPink]
          (90:2) arc[start angle=90,delta angle=72,radius=1.4cm] (162:2)
          (306:2) arc[start angle=306,delta angle=36,radius=1.4cm] (342:2)
          (18:2) -- (18:3)
          (162:2) -- (162:3)
          (234:2) -- (234:3)
          (306:2) -- (306:3);
        \fill[LightPink] (0,0) circle[radius=6pt];
        \draw (0,0) circle[radius=1.4cm];
        \foreach \theta in {18,90,...,306} {
          \node[smallvertex] (s\theta) at (\theta:2) {};
        }
        \node[terminal] (t90) at (0,0) {};
        \node[smallvertex] (x) at (342:2) {};
        \foreach \theta/\i in {18/2,162/k,234/{k-1},306/3} {
          \node[terminal] (t\theta) at (\theta:3) {};
          \path (t\theta) node[anchor=north] {$t_{\i}$};
          \draw (t\theta) -- (s\theta);
        }
        \draw (s90) -- (t90);
        \draw (t90) -- (x);
        \draw (s90) node[anchor=south] {$s_1$};
        \draw (t90) node[anchor=north east] {$t_1$};
        \draw (x) node[anchor=north west] {$y$};
      \end{tikzpicture}
      & &
      \begin{tikzpicture}[x=0.7cm,y=0.7cm]
        \draw[line cap=round,line width=0.2cm,color=LightPink]
          (90:2) arc[start angle=90,delta angle=72,radius=1.4cm] (162:2)
          (270:2) arc[start angle=270,delta angle=36,radius=1.4cm] (306:2)
          (18:2) -- (18:3)
          (162:2) -- (162:3)
          (234:2) -- (234:3)
          (306:2) -- (306:3);
        \fill[LightPink] (0,0) circle[radius=6pt];
        \draw (0,0) circle[radius=1.4cm];
        \foreach \theta in {18,90,...,306} {
          \node[smallvertex] (s\theta) at (\theta:2) {};
        }
        \node[terminal] (t90) at (0,0) {};
        \node[smallvertex] (x) at (270:2) {};
        \foreach \theta/\i in {18/2,162/k,234/{k-1},306/3} {
          \node[terminal] (t\theta) at (\theta:3) {};
          \path (t\theta) node[anchor=north] {$t_{\i}$};
          \draw (t\theta) -- (s\theta);
        }
        \draw (s90) -- (t90);
        \draw (t90) -- (x);
        \draw (s90) node[anchor=north] {$s_1$};
        \draw (t90) node[anchor=east] {$t_1$};
        \draw (x) node[anchor=north] {$y$};
      \end{tikzpicture}
    \end{tabular}
  \end{center}
  \caption{Reducing $|S_1|$ of finding terminal-$\ktt$ minors
    depending on the position of $y$.}
  \label{fig:k23-minor-in-sunny-graph}
\end{figure}
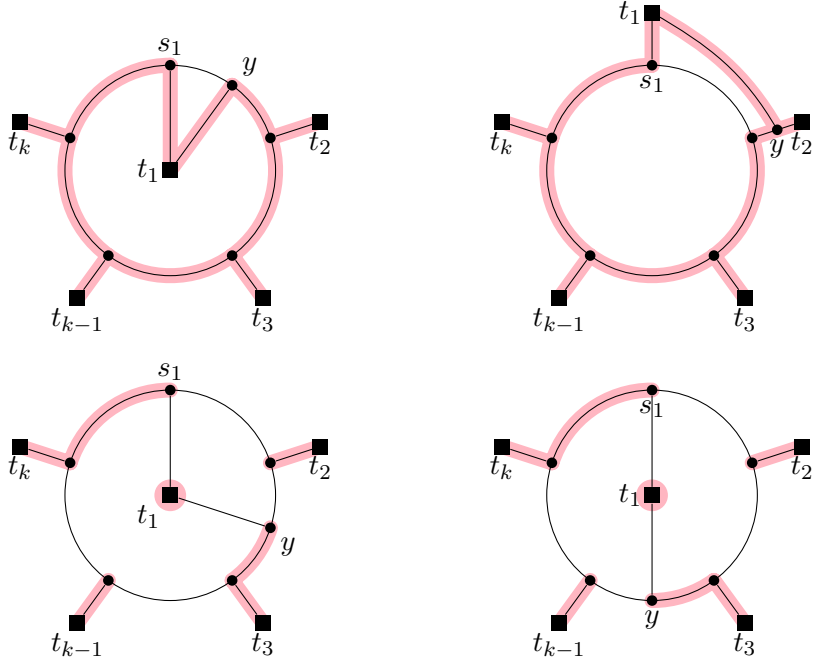

\begin{proposition}\label{prop:no-2-linkage}
  There are no two vertex-disjoint paths, one from $t_i$ to $t_{i'}$, the other
  from $t_{j}$ to $t_{j'}$, with $i < j < i' < j'$.
\end{proposition}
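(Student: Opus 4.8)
The plan is to argue by contradiction. Suppose there are vertex-disjoint paths $P$ from $t_i$ to $t_{i'}$ and $Q$ from $t_j$ to $t_{j'}$ with $i<j<i'<j'$. Since the circuit $C$ meets the terminals in the cyclic order $t_1,\ldots,t_k$, the four terminals $t_i,t_j,t_{i'},t_{j'}$ appear on $C$ in this cyclic order and split it into four arcs $\alpha$ (from $t_i$ to $t_j$), $\beta$ (from $t_j$ to $t_{i'}$), $\gamma$ (from $t_{i'}$ to $t_{j'}$) and $\delta$ (from $t_{j'}$ to $t_i$). My goal is to produce a terminal-$K_4$ minor on $\{t_i,t_j,t_{i'},t_{j'}\}$. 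Once this is in hand, the hypothesis $|Z|=k\ge 5$ lets me invoke the earlier proposition of this section (a terminal-$K_4$ minor together with a fifth terminal forces a terminal-$\ktt$ minor), which contradicts that $G$ is terminal-$\ktt$ free.

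The reason to expect a terminal-$K_4$ minor is that all six required connections are present: the arcs $\alpha,\beta,\gamma,\delta$ supply the four sides $t_it_j,\ t_jt_{i'},\ t_{i'}t_{j'},\ t_{j'}t_i$, while $P$ and $Q$ supply the two diagonals $t_it_{i'}$ and $t_jt_{j'}$. The arcs are pairwise internally disjoint and avoid the four terminals internally, and $P,Q$ are vertex-disjoint from one another, so the only genuine difficulty is that $P$ and $Q$ may run along $C$ rather than being internally disjoint from it. To neutralize this I would first replace $P$ by an honest chord: the pair $t_j,t_{j'}$ splits $C$ into the arc carrying $t_{i'}$ and the arc carrying $t_i$, and $P$ runs from the latter to the former while avoiding $t_j,t_{j'}$ (they lie on $Q$). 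Letting $u$ be the last vertex of $P$ on the $t_i$-side and $v$ the first subsequent vertex on the $t_{i'}$-side gives a subpath $\hat P=P[u,v]$ that is internally disjoint from $C$, still avoids $Q$, and joins the two arcs; the symmetric extraction on $Q$ gives $\hat Q=Q[u',v']$. One then checks that $u,u',v,v'$ occur on $C$ in the crossing cyclic order $u,u',v,v'$, so that $C\cup\hat P\cup\hat Q$ is literally a subdivision of $K_4$ with branch vertices $u,u',v,v'$ and all overlaps eliminated, with every terminal lying on the outer cycle.

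What remains is to slide the four branch vertices onto the terminals: by construction $u$ shares an arc of $C$ with $t_i$, and likewise $u'$ with $t_j$, $v$ with $t_{i'}$, and $v'$ with $t_{j'}$, so absorbing the appropriate sub-arc into each branch set should relocate the branch vertices onto $\{t_i,t_j,t_{i'},t_{j'}\}$ and yield the terminal-$K_4$ minor. I expect this last bookkeeping to be the main obstacle. It amounts to a short case analysis on where $t_i,t_j,t_{i'},t_{j'}$ fall among the four arcs $A_1=[u,u'],A_2=[u',v],A_3=[v,v'],A_4=[v',u]$, checking that the absorbed sub-arcs can be taken pairwise disjoint without breaking any of the six connections; the subtle point is the tension between keeping the extracted chords internally disjoint from $C$ and having the terminals themselves serve as the chord endpoints. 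The extraction step is designed precisely to make this analysis tractable, since afterward the relevant graph is planar with all terminals on a single face.
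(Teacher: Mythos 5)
Your overall shell agrees with the paper's: assume the two disjoint paths exist, extract a terminal-$K_4$ minor on $\{t_i,t_j,t_{i'},t_{j'}\}$, and contradict the first proposition of this section (a terminal-$K_4$ minor plus $|Z|\ge 5$ forces a terminal-$\ktt$ minor). The gap is the step ``one then checks that $u,u',v,v'$ occur on $C$ in the crossing cyclic order'': for your extraction rule this is simply false, and with it the claim that $C\cup\hat P\cup\hat Q$ is a subdivision of $K_4$ collapses. Concretely, let $C$ be a cycle with vertices in cyclic order $t_i,\,a,\,b,\,t_j,\,c,\,t_{i'},\,d,\,t_{j'},\,t_5$, and add the four chords $t_ib$, $bc$, $t_ja$, $ad$. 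Then $P=t_i,b,c,t_{i'}$ and $Q=t_j,a,d,t_{j'}$ are vertex-disjoint with the forbidden pattern (and one checks they are the \emph{only} such pair in this graph, so no cleverer choice of $P,Q$ alone helps). Your rule extracts $\hat P=bc$ (indeed $b$ is the last vertex of $P$ on the $t_i$-side of the $\{t_j,t_{j'}\}$-split and $c$ is the next vertex on the $t_{i'}$-side) and symmetrically $\hat Q=ad$; but the endpoints appear on $C$ in the order $a,b,c,d$, so the two chords are nested rather than crossing, and $C\cup\hat P\cup\hat Q$ is planar with all terminals on one face --- it contains no $K_4$ subdivision at all. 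The desired terminal-$K_4$ minor \emph{does} exist in this graph (branch sets $\{t_i\}$, $\{b,c,t_{i'}\}$, $\{a,t_j\}$, $\{d,t_{j'},t_5\}$ work), but the crossing is witnessed jointly by several excursions of $P$ and $Q$ off $C$, not by a single excursion of each, so your one-chord-per-path extraction discards exactly the connections needed.

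This is why the paper argues globally rather than locally. It first chooses $P,Q$ to minimize the total number of maximal subpaths internally disjoint from $C$, then forms the auxiliary minor $H$ from $E(C)\cup E(P)\cup E(Q)$ by contracting $E(C)\cap(E(P)\cup E(Q))$ and suppressing degree-$2$ vertices, and uses the minimality to prove that the surviving $C$-edges form a cycle \emph{alternating} between $V(P')$ and $V(Q')$. That alternation is the correct replacement for your crossing claim: it guarantees the two $C$-edges at $s$ reach vertices $u',v'$ of $Q'$ whose second $C$-edges land back on $V(P')\setminus\{s\}$, after which the terminal-$K_4$ is assembled by contracting all of $P'$ except the edge at $s$ and all of $Q'$ except the edge separating the $u'$-side from the $v'$-side (Figure~\ref{fig:k4}); this also disposes of your final ``sliding'' step, whose case analysis you left unverified. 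To rescue your outline you would need a minimality assumption on $P,Q$ together with a statement about the joint pattern of \emph{all} their off-$C$ excursions --- which is, in effect, reproving the paper's argument.
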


\begin{proof}
  By contradiction. For convenience, let's denote $s = t_i$,
  $t = t_{i'}$, $s' = t_{j}$ and $t' = t_{j'}$. Let $P$ be the
  $st$-path and $Q$ the $s't'$-path. We may assume that we choose $P$
  and $Q$ to minimize their total number of maximal subpaths disjoint
  from $C$.

  We consider the set (not multi-set) of edges
  $E(C) \cup E(P) \cup E(Q)$, and only keep $s, s', t, t'$ as
  terminals.  This defines a subgraph $G'$ of $G$ of maximum degree
  $4$ by construction. Contract edges in $E(C) \cap (E(P) \cup E(Q))$,
  and then contract edges so that vertices of degree 2 are
  eliminated.
  This gives a minor $H$ where the only vertices not of
  degree 4 are $s,t,s',t'$, which have degree 3. $E(H) \cap E(P)$
  induces an $st$-path $P'$ in $H$, $E(H) \cap E(Q)$ induces an
  $s't'$-path $Q'$ in $H$. $P'$ and $Q'$ are again vertex-disjoint. We
  call the remaining edges of $E(C)$ in $H$ {\em $C$-edges}. They
  induce a cycle which alternates between vertices of $P'$ and $Q'$.
  To see this, suppose that $e$ is such an edge joining
  $x,y \in V(P')$ (the case for $Q'$ is the same).  We could then
  replace the subpath of $P$ between $x,y$ by the subpath of $C$ which
  was contracted to form $e$. This would reduce, by at least $1$, the
  number of maximal subpaths of $P$ disjoint from $C$, a
  contradiction.

  Consider the two vertices $u'$ and $v'$ of $Q'$ adjacent to $s$,
  such that $s',u',v',t'$ appear in that order on $Q'$. $u'$ and $v'$
  each has one more incident $C$-edge, whose extremities
  (respectively) are $u$, $v$ and must then be on
  $V(P') \setminus \{s\}$. We create a terminal-$K_4$ minor on
  $s,s',t,t'$ as follows --- see Figure~\ref{fig:k4}, where $u,v$ may
  be in either order on $P'$.
  We contract all the edges of $P'$ except the one $e_s$ incident to
  $s$, and all the edges of $Q'$ except the one $e_{u'}$ incident to
  $u'$ in the direction of $t'$, we get a terminal-$K_4$ minor with
  the edges $su$, $sv$, $uu'$, $vv'$, $e_s$ and $e_u$.  One easily
  checks that this leads to the desired terminal-$K_4$ minor This
  contradiction completes the proof.
\end{proof}

\begin{figure}[htbp]
  \begin{center}
    \begin{tikzpicture}[x=1cm,y=1cm]
      \draw[line cap=round, line width = 0.8cm, color=LightPink]
        (-0.05,2) -- (0.05,2)
        (1,2) -- (11,2)
        (0,0) -- (3,0)
        (4,0) -- (11,0);
      \foreach \i in {0,1,6,8,11} {
        \node[smallvertex] (p\i) at (\i,2) {};
      }
      \foreach \i in {0,3,4,6,11} {
        \node[smallvertex] (q\i) at (\i,0) {};
      }
      \draw[blue,thick]
        (p0) -- (p1) (q3) -- (q4)
        (p0) -- (q3) -- (p6)
        (p0) -- (q6) -- (p8);
      \draw[black]
         (p1) -- (p11)
         (q0) -- (q3)
         (q4) -- (q11);
       \draw (p0) node[anchor=south] {$s$};
       \draw (p6) node[anchor=south] {$u$};
       \draw (p8) node[anchor=south] {$v$};
       \draw (p11) node[anchor=south] {$t$};
       \draw (q0) node[anchor=north] {$s'$};
       \draw (q3) node[anchor=north] {$u'$};
       \draw (q6) node[anchor=north] {$v'$};
       \draw (q11) node[anchor=north] {$t'$};
       \draw (5.5,3) node {$P'$};
       \draw (5.5,-1) node {$Q'$};
    \end{tikzpicture}
  \end{center}
  \caption{How to get a terminal-$K_4$ minor: red parts are contracted into
    single nodes, the blue edges will then form a $K_4$.}
  \label{fig:k4}
\end{figure}
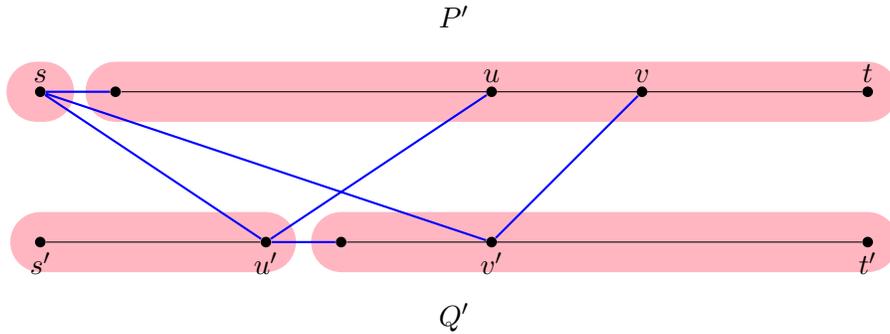









To conclude the characterization of terminal-$\ktt$ minor free
graphs, we use (a generalization of) the celebrated 2-linkage theorem.
Take a planar graph $H$, whose outer face boundary is the cycle
$t_1,t_2,\ldots,t_k$, and whose inner faces are triangles. For each
inner triangle, add a new clique of arbitrary size, and connect each
vertex of the clique to the vertices of the triangle. Any graph built
this way is called a \emph{$(t_1,\ldots,t_k)$-web}, or a
$\{t_1, \ldots ,t_k\}$-web if we do not specify the ordering.

Note that a $Z$-web, for some set $Z$, can be described via
\emph{Okamura-Seymour instances} (OS-instance).  An OS-instance is a
planar graph where all terminals appear on the boundary of the outer
face.  An {\em Extended OS Instance} is obtained from an OS-instance
by adding arbitrary graphs, called \emph{$3$-separated sets}, each
connected to up to three vertices of some inner face of the
Okamura-Seymour instance. We also require that any two $3$-separated
sets in a common face cannot be crossing each other in that
face. Extended OS instances are precisely the $Z$-webs.

\begin{theorem}[Seymour~\cite{seymour1980disjoint},
  Shiloach~\cite{shiloach1980polynomial},
  Thomassen~\cite{thomassen19802}
  ]\label{th:linkage}
  Let $G$ be a graph, and $s_1,\ldots,s_k \in V(G)$. Suppose there are
  no two disjoint paths, one with extremity $s_i$ and $s_{i'}$, and
  one with extremity $s_j$ and $s_{j'}$, with $i < j < i' < j'$.

  Then $G$ is the subgraph of an $(s_1,s_2,\ldots,s_k)$-web.
\end{theorem}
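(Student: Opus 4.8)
The plan is to read Theorem~\ref{th:linkage} as the $k$-terminal, web-shaped reformulation of the classical four-terminal $2$-linkage theorem, and to derive it from the four-terminal case, which I take as a black box through the cited works of Seymour, Shiloach and Thomassen. The guiding observation is that the hypothesis says exactly the following: for every four terminals, listed in their natural circular order $s_i, s_j, s_{i'}, s_{j'}$ with $i<j<i'<j'$, there is no pair of vertex-disjoint paths joining the two ``diagonal'' pairs $\{s_i, s_{i'}\}$ and $\{s_j, s_{j'}\}$. By the four-terminal theorem, each such absent linkage is equivalent to a planar (disk) structure in which these four terminals lie on the boundary in that cyclic order. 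The entire task is then to promote these many local constraints into a single global embedding with $s_1, \ldots, s_k$ on one face in index order, which is precisely what a $(s_1,\ldots,s_k)$-web encodes.

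First I would reduce to an essentially $4$-connected instance. I delete every vertex and edge lying on no path between two terminals, since these are irrelevant to every linkage, and I absorb every maximal terminal-free subgraph that attaches to the rest of $G$ through a separator of size at most $3$; each such piece is a candidate $3$-separated set, and I replace it by a clique on its at most three attachment vertices to form a reduced graph $G'$. Two facts must be checked, both routine. First, the reduction preserves the hypothesis: a crossing diagonal linkage in $G'$ lifts to one in $G$, and a piece glued along at most three vertices can never route one of two disjoint diagonal paths. Second, it suffices to prove the conclusion for $G'$: re-attaching arbitrary $3$-separated sets into the triangular faces of a web leaves it inside the web class, and since webs use cliques as their $3$-separated sets, $G$ reappears as a subgraph. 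The reduction must be strong enough that, for any chosen quadruple, the internal reduction performed by the four-terminal theorem is trivial, so that the theorem returns a disk embedding of all of $G'$.

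Next I would build the global embedding. If $G'$ still admits a separation of size at most $3$ with terminals on both sides, I split $G'$ along it and recurse on the two sides, afterwards gluing the resulting webs along their shared boundary, which lies on a common face; so I may assume $G'$ is $3$-connected. Then applying the four-terminal theorem to a single crossing quadruple yields a planar embedding of $G'$ in a disk with those four terminals on its boundary in the prescribed order, and by Whitney's theorem this embedding is unique up to reflection. Hence the outer face and the cyclic order of the remaining terminals are forced, and feeding in the non-crossing conditions of the other quadruples places all of $s_1, \ldots, s_k$ on one common face in the order $s_1, \ldots, s_k$. Taking that face as the outer face, triangulating the inner faces (which only adds edges and keeps $G'$ a subgraph), and re-inserting the $3$-separated cliques then exhibits $G$ as a subgraph of an $(s_1,\ldots,s_k)$-web.

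The main obstacle is this global consistency step: the four-terminal theorem supplies only quadruple-by-quadruple disk embeddings, and the real content is to show that they agree and place all $k$ terminals on one face in the correct cyclic order. The $3$-connectivity secured above is exactly what tames this, through uniqueness of the planar embedding, so the combinatorial orderings coming from different quadruples cannot conflict. The surrounding bookkeeping --- verifying that the reductions preserve the hypothesis, handling the small-separation splits, and dispatching the degenerate small-$k$ instances --- is where the remaining effort sits, with all of the deep topological content delegated to the cited $2$-linkage theorem.
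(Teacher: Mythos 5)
Your proposal takes a genuinely different route from the paper's. The paper does not stitch together quadruple-by-quadruple applications of the four-terminal theorem at all: it welds a rigid ring grid (seven circular layers, $2k$ rays) around $G$, identifies $s_1,\ldots,s_k$ with alternate inner-layer vertices, places four new terminals $s,t,s',t'$ on the outer layer, and makes a \emph{single} invocation of the $k=4$ theorem on the augmented graph; the $3$-connectedness of the grid forces a unique embedding, which pins $G$ inside the inner disk with the terminals in the correct cyclic order in one stroke. Your core argument for the $3$-connected case is sound as far as it goes: in a $3$-connected planar graph two distinct faces share at most an edge, so the faces produced by different quadruples must coincide, and if the cyclic order on that common face disagreed with the index order, some pair $\{s_i,s_{i'}\},\{s_j,s_{j'}\}$ with $i<j<i'<j'$ would become non-interleaved on the face cycle and two disjoint arcs of that cycle would realize exactly the linkage your hypothesis forbids.

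The genuine gap is the step ``if $G'$ admits a separation of order at most $3$ with terminals on both sides, split, recurse, and glue the resulting webs along their shared boundary, which lies on a common face.'' Everything hard is hidden in that sentence. To recurse you must add the separator $\{x,y,z\}$ as terminals on each side, decide \emph{where} they sit in each side's cyclic order, and verify that the augmented instances still satisfy the no-crossing-linkage hypothesis; none of this is automatic. Worse, the gluing can only produce a web with outer order $s_1,\ldots,s_k$ if the terminals of the two sides occupy disjoint contiguous arcs of that cyclic order, i.e., the two sides' terminal sets do not interleave across the $3$-separation --- and ruling out such interleaving under your hypothesis is essentially a special case of the theorem being proved, so as written the step is close to circular. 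The same unestablished claim underlies your assertion that after the reduction the four-terminal theorem's internal $3$-cut reductions are trivial for every quadruple (a $3$-separated piece relative to a chosen quadruple may contain \emph{other} terminals, which is exactly the situation the split step was supposed to eliminate), and without that you do not get a planar embedding of all of $G'$, hence no Whitney uniqueness. The paper's grid gadget is precisely a device for bypassing this split-and-glue: the rigid frame makes all local disk structures consistent simultaneously. Your outline is repairable, but the non-interleaving/gluing lemma is real mathematical content that must be proved directly, not bookkeeping.
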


The linkage theorem is usually stated in the special case when
$k = 4$, but the extension presented here is folklore. One can reduce
the general case to the case $k=4$ by identifying the vertices
$s_1,\ldots s_k$ with every other inner vertex of a ring grid with $7$
circular layers and $2k$ rays, and choosing 4 vertices of the outer
layer, labelling them $s,t,s',t'$ in this order, and connecting them
in a square --- see Figure~\ref{fig:linkage-reduction}. Is is easy to
prove that there are two vertex-disjoint paths, one with extremity $s$
and $s'$, the other with extremities $t$ and $t'$ in the graph built
this way if and only there are two disjoint paths as in the theorem in
the original graph (for instance, use the middle layer to route the
path from $s$ to $s_i$ with only 2 bends, then the remaining graph is
a sufficiently large subgrid to route the three other paths). Because
the grid is 3-connected, its embedding is unique and we get that $G$
is embedded inside the inner layer of the ring, from which the general
version of the theorem is deduced.

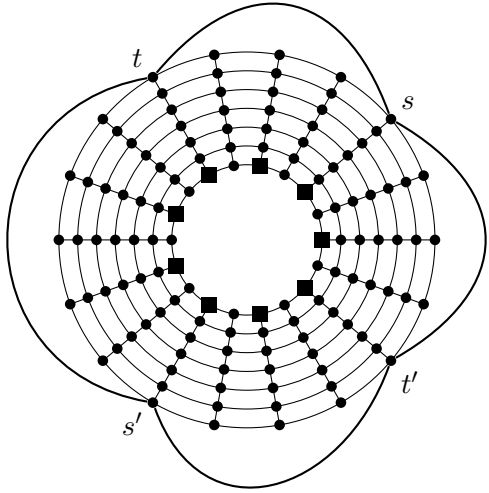
\begin{figure}[htbp]
  \begin{center}
    \begin{tikzpicture}[x=0.25cm,y=0.25cm]
      \foreach \theta in {0,20,...,340} {
        \foreach \distance in {4,5,...,10} {
          \node[smallvertex] (n\theta\distance) at (\theta:\distance) {};
        }
        \foreach \distance in {4,5,...,9} {
          \draw (\theta:\distance) -- +(\theta:1);
        }
      }
      \foreach \theta in {0,40,...,320} {
        \node[terminal] (t\theta) at (\theta:4) {};
      }
      \foreach \distance in {4,5,...,10} {
        \draw (0,0) circle[radius=\distance];
      }
      \draw[thick,looseness=1.5] (n4010) to[out=110,in=50] (n12010);
      \draw[thick,looseness=1.5] (n12010) to[out=190,in=170] (n24010);
      \draw[thick,looseness=1.5] (n24010) to[out=290,in=250] (n32010);
      \draw[thick,looseness=1.5] (n32010) to[out=30,in=330] (n4010);
      \draw (n4010) node[anchor = south west] {$s$};
      \draw (n12010) node[anchor = south east] {$t$};
      \draw (n24010) node[anchor = north east] {$s'$};
      \draw (n32010) node[anchor = north west] {$t'$};
    \end{tikzpicture}
  \end{center}
  \caption{Gadget for the proof of the general linkage theorem.}
  \label{fig:linkage-reduction}
\end{figure}

By using Theorem~\ref{th:linkage} with
Proposition~\ref{prop:no-2-linkage}, we get that any $2$-connected
terminal-$\ktt$ free graph is a subgraph of a $Z$-web where $Z$ is the
set of terminals.

This now completes the proof of
Theorem~\ref{thm:k23char}. \hspace{\stretch{1}} $\Box$

We now establish Corollary~\ref{cor:k23char}.

\begin{proof}
  If $G$ is terminal-$\ktt$ minor-free, then clearly contacting all
  blocks but one must create a terminal-$\ktt$ free instance.

  Conversely, suppose that $G$ has a terminal-$\ktt$ minor. Since
  this minor is $2$-connected, it must be a minor of a graph obtained
  by contracting or deleting all the edges of every $2$-connected
  component except one. Let's call that last block $B$. Hence the
  terminal-$\ktt$ minor is a minor of the graph obtained by
  contracting all the edges not in $B$.
\end{proof}

\subsection{A Consequence for Multiflows}
\medskip

Recall from the introduction that for a graph $G$ and
$Z \subseteq V(G)$, we call $(G,Z)$ cut-sufficient if for any
multi-flow instance (capacities on $G$, demands between terminals in
$Z$), we have feasibility if and only if the cut condition holds.

\noindent
\begin{repcorollary}{cor:flows}
  $(G,Z)$ is cut-sufficient if and only if it is terminal-$\ktt$ free.
\end{repcorollary}

\begin{proof}

  We first establish a lemma which we use again in the next section.

  \begin{lemma}
    Let $G$ be an extended OS instance and $F$ be a $3$-separated
    graph whose attachment vertices to the planar part are
    $\{x,y,z\}$. We may define a new graph $G'$ from $G$ by removing
    $V(F) \setminus \{x,y,z\}$ and add a new vertex $s$ with edges
    $sx,sy,sz$ with capacities $c_x,c_y,c_z$ so that minimum cuts
    separating disjoint sets of terminals in $Z$ have the same
    capacities in $G'$ and in $G$.
  \end{lemma}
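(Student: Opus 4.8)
The plan is to show that a $3$-separated graph $F$ can influence a terminal cut only through the tripartition it induces on its attachment set $\{x,y,z\}$, that this influence is captured by three ``isolating'' cut values of $F$, and that a star on $\{x,y,z\}$ reproduces these values exactly. First I would set $W := V(F)\setminus\{x,y,z\}$, the interior of $F$, and record that (by the definition of $3$-separated) every edge incident to $W$ lies in $F$, so $W$ touches the rest of $G$ only through $x,y,z$. Consequently, for any $S\subseteq V(G)$ the crossing edges of $\delta_G(S)$ split into those avoiding $W$ entirely and those of $F$ meeting $W$; since $x,y,z\notin W$ and $W$ contains no terminal, for a minimum cut separating two disjoint terminal sets the placement of $W$ may be optimized independently of the rest. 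The resulting $F$-contribution depends only on how $S$ splits $\{x,y,z\}$: it is $0$ when $x,y,z$ lie on a common side, and otherwise equals the minimum capacity of an $F$-cut isolating the lone vertex, call these values $f_x,f_y,f_z$ (e.g.\ $f_x$ separates $x$ from $\{y,z\}$ inside $F$).

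Writing $G_0$ for $G$ with $W$ deleted, this decomposition yields
\[
\lambda_G(A,B) \;=\; \min_{S'}\Big( c_{G_0}(S') + g\big(S'\cap\{x,y,z\}\big) \Big),
\]
the minimum over $S'\subseteq V(G_0)$ separating terminal sets $A,B$, where $g(\emptyset)=g(\{x,y,z\})=0$, $g(\{x\})=g(\{y,z\})=f_x$, and cyclically for $y,z$. I would then build the star: add $s$ with edge capacities $c_x:=f_x$, $c_y:=f_y$, $c_z:=f_z$. The identical decomposition applied to $G'$ (whose ``interior'' is the single vertex $s$) gives the same expression with $g$ replaced by the star's optimized contribution $g'$; placing $s$ on the cheaper side, isolating $x$ costs $\min(c_x,c_y+c_z)$ and the all-same case costs $0$. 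Since $G_0$ is common to both formulas, it suffices to prove $g=g'$, i.e.\ $\min(f_x,f_y+f_z)=f_x$ and its two cyclic analogues, which is exactly the triangle inequality $f_x\le f_y+f_z$.

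I would establish that inequality by submodularity of the cut function $c_F$ of $F$: taking minimizers $A$ for $f_y$ and $B$ for $f_z$ (with $y\in A$, $z\in B$, and $x$ in neither), the set $A\cup B$ separates $x$ from $\{y,z\}$, so $f_x\le c_F(A\cup B)\le c_F(A)+c_F(B)-c_F(A\cap B)\le f_y+f_z$. With the triangle inequality in hand, the choice $c_i=f_i$ gives $g'=g$, and the two minima above coincide for every pair of disjoint terminal sets $A,B$, so $\lambda_G(A,B)=\lambda_{G'}(A,B)$, as claimed. (The degenerate cases where an attachment vertex is absent, or where some $f_i=0$, are handled by the same formula.)

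The hard part, and the step I would write most carefully, is the structural decomposition: justifying that the capacity of any terminal-separating minimum cut splits cleanly into a $G_0$-part plus a purely $\{x,y,z\}$-tripartition-dependent $F$-part, and in particular that a minimum cut never benefits from routing ``through'' $F$ in a mode other than the three isolating ones. Once that is pinned down, the star realization and the submodular triangle inequality are both short.
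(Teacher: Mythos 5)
Your proof is correct and takes essentially the same approach as the paper: it defines the three isolating minimum-cut values of $F$, replaces $F$ by a star on $\{x,y,z\}$ with exactly those capacities, and uses the triangle inequality $f_x \le f_y + f_z$ (which the paper also invokes, to ensure the new vertex $s$ never lies on the lone-vertex side of a minimum cut) to match cut values in each of the possible tripartitions. Your explicit decomposition into a $G_0$-part plus a tripartition-dependent $F$-contribution is just a more systematic phrasing of the paper's direct cut correspondence $S' \mapsto S' \cup S_x$ in one direction and deletion of the interior of $F$ in the other.
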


  \begin{proof}
    For each $\alpha \in \{x,y,z\}$, let $c_\alpha$ be the value of a
    minimum cut in $F$ separating $\alpha$ from
    $\{x,y,z\} \setminus \{\alpha\}$, for $\alpha \in \{x,y,z\}$. We
    use $S_\alpha$ to denote the shore of such a cut in $F$, where
    $\alpha \in S_{\alpha}$. We replace $H$ in $G$ by a claw where the
    central vertex is a new vertex $u_H$, and leaves are $x$, $y$ and
    $z$, and the capacity of $u_H\alpha$ is $c_\alpha$ for any
    $\alpha \in \{x,y,z\}$. We claim that this transformation
    preserves the values of minimum cuts between sets of terminals.

    Notice that
    $c_\alpha \leq \sum_{\beta \in \{x,y,z\} \setminus \{\alpha\}} c_\alpha$,
    hence a minimum cut $S$ of $G'$ containing $x$ but none of $y$,
    $z$, does not contain $s$. For a cut $S'$ in $G'$ with
    $x \in S, s, y,z \notin S$, we may then associate a cut $S$ in $G$
    with same capacity, by taking $S := S' \cup S_x$. Reciprocally,
    given a cut $S$ of $G$ with $x \in S$, $y, z \notin S$, the cut
    $S' := S \setminus V(F) \setminus \{x,y,z\}$ has capacity at most
    the capacity of $S$. Thus the values of minimum terminal cuts are
    preserved.


  \end{proof}

  Since the $3$-separated graphs are non-crossing, we may iterate the
  process to obtain the following.

  \begin{lemma}\label{lem:upshot}
    For any extended OS instance $G$ we may replace each $3$-separated
    graph by a degree-3 vertex to obtain an equivalent (planar) OS
    instance $G'$. It is equivalent in that for any partition
    $Z_1 \cup Z_2 = Z$, the value of a minimum cut separating $Z_1,Z_2$ in
    $G$ is the same as it is in $G'$.
  \end{lemma}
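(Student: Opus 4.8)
The plan is to obtain Lemma~\ref{lem:upshot} by iterating the single-replacement lemma just established, once for each $3$-separated graph. Write $F_1, \ldots, F_m$ for the $3$-separated graphs of the extended OS instance $G$, with attachment triples $\{x_i, y_i, z_i\}$ on the planar skeleton. Set $G_0 := G$, and let $G_i$ be the graph obtained from $G_{i-1}$ by deleting the interior $V(F_i) \setminus \{x_i, y_i, z_i\}$ and inserting a claw at a new degree-$3$ vertex joined to $x_i, y_i, z_i$ with the capacities $c_{x_i}, c_{y_i}, c_{z_i}$ prescribed by that lemma. The final graph $G' := G_m$ will be the desired instance.

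First I would check that each $G_i$ is again an extended OS instance whose remaining $3$-separated graphs are exactly $F_{i+1}, \ldots, F_m$. Each $F_j$ is attached to the rest of the graph only through its at most three attachment vertices, so its interior is vertex-disjoint from the interior of every other $F_\ell$; consequently the surgery producing $G_i$ touches no vertex or edge interior to $F_{i+1}, \ldots, F_m$, and these survive unchanged as $3$-separated graphs. The new claw is a planar tree drawn inside the triangular face that previously contained $F_i$, so planarity and the property that all terminals lie on the outer face are preserved. The non-crossing hypothesis is what keeps this valid when several $3$-separated graphs share one face: they occupy disjoint sub-regions of that face, so placing the claw for $F_i$ in $F_i$'s region leaves the regions of the others intact.

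Granting the above, I would apply the single-replacement lemma at the $i$-th step to conclude that every minimum cut separating two disjoint sets of terminals has the same value in $G_{i-1}$ as in $G_i$. Chaining these equalities over $i = 1, \ldots, m$ yields that such minimum cut values agree in $G$ and in $G' = G_m$, which by construction has no $3$-separated graphs and is therefore a planar OS instance. For the statement about a partition $Z_1 \cup Z_2 = Z$, note that a minimum cut separating $Z_1$ from $Z_2$ is precisely a minimum cut separating the two disjoint terminal sets $Z_1$ and $Z_2$; hence its value is preserved at each step and thus overall.

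The step I expect to require the most care is the independence of the successive replacements, namely that rewriting $F_i$ as a claw does not disturb the $3$-separation structure or the cut contribution of any $F_j$ with $j \neq i$. This is exactly where the non-crossing assumption earns its keep: without it, two $3$-separated graphs in a common face could interleave, and the local claw surgery might alter the region, and hence the effective cut values, of a neighbouring $3$-separated graph. Once disjointness of the regions is in hand, each application of the single-replacement lemma is purely local and the induction goes through verbatim.
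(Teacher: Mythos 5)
Your proposal takes essentially the same route as the paper: the paper's entire justification for Lemma~\ref{lem:upshot} is the one-line remark that ``since the $3$-separated graphs are non-crossing, we may iterate the process,'' i.e.\ repeated application of the single-replacement lemma, which is exactly your induction. Your write-up is correct and in fact spells out more carefully than the paper does why the intermediate graphs remain extended OS instances and why the non-crossing hypothesis keeps the successive local surgeries independent.
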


  We now return to the proof of the corollary.  First, if there is a
  terminal-$\ktt$ minor then we obtain a ``bad'' multiflow instance
  as follows. For each deleted edge we assign it a capacity of
  $0$. For each contracted edge we assign it a capacity of
  $\infty$. The remaining 6 edges have unit capacity. We now define
  four unit demands.  One between the two degree-3 nodes of the
  terminal minor and a triangle on the remaining three nodes.  It is
  well-known that this instance has a flow-cut gap of $\frac{4}{3}$
  cf.~\cite{chekuri2010flow,chakrabarti2012cut}.

  Now suppose that $G$ is terminal-$\ktt$ free and consider a
  multiflow instance with demands on $Z$. By the preceding corollary,
  we may replace each 3-separated graph by a degree-3 vertex and this
  new OS instance will satisfy the cut condition if the old one did.
  Hence the Okamura-Seymour Theorem~\cite{Okamura81} yields a
  half-integral multiflow in the new instance.

  We now show that the flow in the modified instance can be mapped
  back to the original extended OS instance. We do this one
  $3$-separated graph at a time.  Consider the total flow on paths
  that use the new edges through $s$ obtained via the reduction. Let
  $d(xy),d(yz),d(zx)$ be these values. We claim that these can be
  routed in the original $F$. First, it is easy to see that this
  instance on $F$ satisfies the cut condition.  Any violated cut
  $\delta_F(S)$ would contain exactly one of $x,y,z$, say $x$. Hence
  this cut would have capacity less than $d(xy) + d(xz)$ but since
  this flow routed through $s$, this value must be at most $c_x$
  which is a contradiction. Finally, the cut condition is sufficient
  to guarantee a multiflow in any graph if demands only arise on the
  edges of $K_4$, cf. Corollary
  72.2a~\cite{schrijver2003combinatorial}. Hence we can produce the
  desired flow paths in $F$.
\end{proof}

\section{General Case: Gomory-Hu Terminal Trees
            in terminal-$\ktt$ minor free graphs.}
\label{sec:last}

In this section we prove Theorem~\ref{thm:minorGH} using the
characterization of terminal-$\ktt$ minor free graphs. The high
level idea is a reduction to Theorem~\ref{th:gh-subtrees} by
contracting away the non-terminal nodes in the graph.

In the following we let $(G,Z)$ denote a connected graph $G$ and
terminals $Z \subseteq V(G)$.  Recall that the classical Gomory-Hu
Algorithm produces a $GH$ $Z$-Tree $T=(V(T),E(T))$ where formally
$V(T)$ is a partition $\mathcal{P}=\{B(v): v \in Z\}$ of $V(G)$.  We
call $B(v)$ the {\em bag} for terminal $v$ and informally one often
thinks of $V(T)=Z$. In addition each edge $st \in E(T)$ identifies a
minimum $st$-cut in $G$, i.e., it is {\em encoding} as per the
discussion following Definition~\ref{defn:encoding}.

We say that GH $Z$-tree $T$ occurs as a {\em bag minor} in $G$ if (i)
each bag induces a connected graph $G[B(v)]$ and (ii) for each
$st \in E(T)$, there is an edge of $G$ between $B(s)$ and $B(t)$.  We
say that $T$ occurs as a {\em weak bag minor} if it occurs as a bag
minor after deletion of some non-terminal vertices (from its bags and
$G$).

\begin{definition}
  The pair $(G,Z)$ has the {\em GH Minor Property} if for any subgraph
  $G'$ with capacities $c'$, there is a GH Tree which occurs as a bag
  minor in $G'$.  The pair $(G,Z)$ has the {\em weak} GH Minor
  Property if such GH trees occur as a weak bag minor.
\end{definition}


An example where we have the weak but not the (strong) property is for
$\ktt$ where $Z$ consists of the degree $2$ vertices and one of the
degree $3$ vertices, call it $t$. Clearly this is terminal-$\ktt$
minor free since it only has $4$ terminals. The unique GH Tree $T$ is
obtained from $G$ by deleting the non-terminal vertex and assigning
capacity $2$ to all edges in the $3$-star. Hence $T$ is obtained as a
minor (in fact a subgraph) of $G$. However, the bag $B(t)$ consists of
the $2$ degree-$3$ vertices which do not induce a connected subgraph.
Hence $T$ does not occur as a bag minor.  Fortunately, such instances
are isolated and arise primarily due to instances with at most $4$
terminals. We handle these separately.

\begin{proposition}
\label{prop:blech}
Let $G$ be an undirected, connected graph and $Z$ be a subset of at
most $4$ terminals.  Suppose that edge-capacities are given so that no
two central cuts have the same capacity.  Then the unique GH Tree $T$
occurs as a weak bag minor, and if $T$ is a path, then it occurs as a
bag minor.
\end{proposition}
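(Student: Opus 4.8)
The plan is to prove Proposition~\ref{prop:blech} by a direct case analysis on the number of terminals $|Z| \in \{1,2,3,4\}$, since the unique GH $Z$-tree on at most $4$ terminals has a very limited shape (it is a tree on at most $4$ nodes, hence either a path or a star). First I would dispense with the trivial cases $|Z|=1$ (empty tree) and $|Z|=2$ (single encoding edge whose fundamental cut is a central minimum cut; the two shores are connected by centrality, so $T$ is immediately a bag minor). For $|Z|=3$ the tree $T$ is necessarily a path, say $t_1-t_2-t_3$, so the goal is to show it occurs as a \emph{bag} minor. Here I would invoke centrality of the two fundamental cuts: the edge $t_1t_2$ encodes a minimum $t_1t_2$-cut $\delta(X)$ with $t_1 \in X$, and $t_2t_3$ encodes a minimum $t_2t_3$-cut $\delta(Y)$ with $t_3 \in Y$; assigning $B(t_1)=X$, $B(t_3)=Y$ and $B(t_2)=V\setminus(X\cup Y)$ gives a partition into connected bags with the required connecting edges.

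The substantive case is $|Z|=4$, where $T$ is a tree on four nodes: either a path $t_1-t_2-t_3-t_4$ or a star $K_{1,3}$ centered at some terminal. For the \emph{path}, the claim is again that $T$ is a genuine bag minor. I would define the bags by the three nested/laminar fundamental cuts: the three encoding edges give minimum cuts whose shores (taken as central, hence connected) can be arranged into a laminar family along the path, and the four bags are the ``intervals'' cut out by consecutive cuts. The main technical point is verifying that the middle bags are connected and nonempty and that consecutive bags are joined by an edge of $G$; connectivity of the middle pieces is exactly the kind of statement Lemma~\ref{lem:middle} is tailored to, since it guarantees that a set $M$ sandwiched between two minimum cuts to a common target cannot be edge-isolated from its complement, which forces the intermediate shores to be connected subpaths.

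For the \emph{star} case the Proposition only asks for a \emph{weak} bag minor, and this is where the genuine obstacle lies — the $K_{2,3}$ example in the preceding discussion shows a star GH tree need \emph{not} be a (strong) bag minor because the central bag may fail to be connected. So here I would allow deletion of non-terminal vertices. Let $t_0$ be the center with leaves $t_1,t_2,t_3$; each leaf edge $t_0t_i$ encodes a minimum $t_0t_i$-cut $\delta(X_i)$ with $t_i \in X_i$, and by uniqueness of minimum cuts (from the perturbed capacities) these three shores are pairwise disjoint. Setting $B(t_i)=X_i$ for $i=1,2,3$ gives three connected bags each joined to the center by an edge; the central bag is $B(t_0)=V\setminus(X_1\cup X_2\cup X_3)$, which contains $t_0$ but may be disconnected. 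The fix is to observe that after deleting the non-terminal vertices that obstruct connectivity — equivalently, contracting each $X_i$ to $t_i$ and working in the quotient — the surviving structure realizes $T$ as a weak bag minor; I expect the hardest step to be arguing cleanly that one can delete \emph{only} non-terminal vertices (never a terminal) to repair connectivity of the central region while preserving the three connecting edges, and this is precisely the degenerate behavior the Proposition is isolating before the main argument of Section~\ref{sec:last}.
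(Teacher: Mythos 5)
Your case decomposition matches the paper's proof in outline --- leaf bags are connected because fundamental cuts are central, middle bags of a path are handled by Lemma~\ref{lem:middle} exactly as you propose, and the star is indeed the case where only a \emph{weak} bag minor can be guaranteed --- but your star case stops precisely at the step that carries all the content. You write that the hardest part is to argue one can delete only non-terminal vertices while ``preserving the three connecting edges,'' and then you offer no argument for it. Deleting only non-terminals is in fact the easy half: the set $Y$ of vertices of $B(t_0)$ outside the component of $t_0$ contains no terminal, by definition of the partition. The genuine difficulty is that after deleting $Y$ some leaf bag may lose \emph{all} its edges to the surviving center, i.e.\ $d(B(t_0)\setminus Y, B(t_j))=0$ for some leaf $j$, and nothing in your sketch rules this out; ``contracting each $X_i$ to $t_i$ and working in the quotient'' does not address it either (and contraction is not the operation permitted by the definition of a weak bag minor, which allows only deletion of non-terminals). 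The paper closes exactly this gap with a counting argument based on uniqueness of minimum cuts: with center $1$ and leaves $2,3,4$, if $d(B_1\setminus Y, B_2)=0$, set $R=B_2\cup Y\cup B_3$ and $S=B_2\cup Y\cup B_4$; since $d(R\cap S,\,V\setminus(R\cup S))=0$ one gets $c(R)+c(S)=c(B_3)+c(B_4)$, while $\delta(R)$ is a $13$-cut distinct from the unique minimum one $\delta(B_3)$, so $c(R)>c(B_3)$, and similarly $c(S)>c(B_4)$ --- a contradiction. Without some such argument the star case, and with it the proposition, remains unproved.

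A smaller omission of the same kind occurs in your path cases ($|Z|=3,4$): you assert, or defer, the existence of an edge of $G$ between consecutive bags, but Lemma~\ref{lem:middle} only yields connectivity of the middle bags, not the connecting edges. These again require uniqueness-of-minimum-cut counting: for the path $1,2,3,4$, if $d(B_1,B_2)=0$ then $c(B_1\cup B_3\cup B_4)\leq c(B_3\cup B_4)$, contradicting that $B_3\cup B_4$ induces the unique minimum $23$-cut; and if $d(B_2,B_3)=0$ then $c(B_1)+c(B_4)\geq c(B_2)+c(B_3)$, forcing either $B_2$ to induce a second minimum $12$-cut or $B_3$ a second minimum $34$-cut. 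These are short arguments, but they are not implied by anything you wrote, so as it stands the proposal is a correct skeleton with the decisive steps missing.
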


We omit the  to the very end. In the following it is useful to
see how a GH $Z$-Tree bag minor (weak or strong) immediately implies
such a minor for some $Z' \subseteq Z$.

\begin{lemma}
  \label{lem:subset}
  Let $T$ be a GH $Z$-Tree bag minor for some capacitated graph $G$
  and let $v \in Z$.  Let $uv \in T$ be the edge with maximum weight
  $c'(uv)$ in $T$. If we set $B'(u) = B(v) \cup B(u)$ and $B'(x)=B(x)$
  for each $x \in Z \setminus \{u,v\}$, then the resulting partition
  defines a GH $(Z \setminus v)$-Tree $T'$ which is a bag minor.
\end{lemma}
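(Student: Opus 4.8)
The plan is to equip $T'$ with an explicit tree structure and then verify the two requirements of a bag minor: that every edge of $T'$ is encoding, and that the connectivity and edge conditions hold. Concretely, I define $T'$ on vertex set $Z \setminus \{v\}$ by keeping every edge of $T$ not incident to $v$ (with its weight), and replacing each edge $vw$ with $w \neq u$ by a new edge $uw$ of weight $c'(vw)$; the edge $uv$ itself is discarded. Since $T$ is a tree and $u,w$ are both neighbours of $v$, reattaching each pendant subtree of $v$ to $u$ keeps $T'$ connected and acyclic, so $T'$ is a tree on $Z \setminus \{v\}$. The family $\{B'(x)\}$ is again a partition of $V(G)$ with $x \in B'(x)$, since we only merged $B(v)$ into $B(u)$.

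First I would check that the edges inherited from $T$ remain encoding. Let $e = xy$ be an edge of $T'$ that is also an edge of $T$ (so $e \neq uv$ and $e$ is not incident to $v$). Removing $e$ from $T$ splits $Z$ into two parts; because $uv$ is an edge of $T$ distinct from $e$, the endpoints $u$ and $v$ lie in the same part. Hence whether $B(v)$ is listed separately (in $T$) or merged into $B(u)$ (in $T'$), the fundamental cut of $e$ induces exactly the same cut $\delta(A)$ of $G$. As it was a minimum $xy$-cut of value $c'(e)$ in $T$, it remains one in $T'$.

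The main step is to show each new edge $uw$ is encoding, and this is the crux of the argument. Removing $vw$ from $T$ yields a component $A_w$ containing $w$; removing $uw$ from $T'$ yields the same vertex set $A_w$ as the $w$-side component. None of the bags indexed by $A_w$ were altered, so the fundamental cut of $uw$ in $T'$ equals the fundamental cut of $vw$ in $T$, which is a minimum $vw$-cut of value $c'(vw)$. It remains to argue this same cut is a minimum $uw$-cut. The unique $u$-$w$ path in $T$ is exactly $u,v,w$, so the Gomory-Hu path-minimum formula gives $\lambda(u,w) = \min\{c'(uv), c'(vw)\}$, and since $uv$ is the maximum-weight edge of $T$ (hence $c'(uv) \geq c'(vw)$) this equals $c'(vw) = \lambda(v,w)$. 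The fundamental cut of $vw$ places $w$ on one side and both $v$ and $u$ on the other, so it separates $u$ from $w$; having value $c'(vw) = \lambda(u,w)$, it is indeed a minimum $uw$-cut, and $uw$ is encoding with weight $c'(vw)$. This is the only place the maximality hypothesis on $c'(uv)$ is used.

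Finally I would verify the bag-minor conditions. For the merged bag, $G[B(u)]$ and $G[B(v)]$ are connected and $uv \in E(T)$ supplies a $G$-edge between them, so $G[B'(u)] = G[B(u) \cup B(v)]$ is connected; all other bags are unchanged. For the edge condition, edges of $T'$ coming from $T$ already had a witnessing $G$-edge between their bags, and this survives when $B(u)$ only grows; for a new edge $uw$, the $G$-edge between $B(v)$ and $B(w)$ guaranteed by $vw \in E(T)$ now joins $B'(u) \supseteq B(v)$ to $B'(w) = B(w)$. Thus $T'$ is a GH $(Z \setminus \{v\})$-Tree occurring as a bag minor. The main obstacle is precisely the encoding of the new edges $uw$; everything else is bookkeeping, and the maximality of $c'(uv)$ is exactly what converts a minimum $vw$-cut into a minimum $uw$-cut.
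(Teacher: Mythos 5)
Your proof is correct and takes essentially the same approach as the paper: contract the edge $uv$, observe that every fundamental cut of $T$ other than $uv$'s survives as a fundamental cut of $T'$, and use the maximality of $c'(uv)$ together with the path $u,v,w$ in $T$ to conclude that the cut of each reattached edge is still a minimum $uw$-cut. You spell out the bag-minor bookkeeping and the per-edge encoding check more explicitly than the paper's terse argument, but the core idea is identical.
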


\begin{proof}
  Clearly $T'$ is a bag minor and every fundamental cut of $T$, other
  than $uv$'s, is still a fundamental cut of $T'$. It remains to show
  that for any $a,b \in Z \setminus v$, there is a minimum $ab$-cut
  that does not correspond to the fundamental cut of $uv$.  This is
  immediate if the unique $ab$-path $P$ in $T$ does not contain
  $uv$. If it does contain $uv$, then since $a,b \neq v$, the
  $ab$-path in $T$ contains some edge $vw$.  But since
  $c'(vw) \leq c'(uv)$, the result follows.
\end{proof}

\begin{reptheorem}{thm:minorGH}
  Let $G$ be an undirected graph and $Z \subseteq V$.  $(G,Z)$ has the
  weak GH Minor Property if and only if $(G,Z)$ is a terminal-$\ktt$
  minor free graph. Moreover, if none of $G$'s blocks is a
  $4$-terminal instance, then $(G,Z)$ has the GH Minor Property.
\end{reptheorem}

  \begin{proof}
    If $G$ has a terminal-$\ktt$ minor, then by appropriately
    setting edge capacities to $0,1$ or $\infty$ we find a case where
    $G$ does not have the desired bag minor. So we now assume that $G$
    is a terminal-$\ktt$ minor free graph. Let $G'$ be some subgraph
    of $G$ with edge capacities $c(e)>0$, perturbed so that all
    minimum cuts are unique. We show that the unique GH $Z$-tree
    occurs as a bag minor.

    We deal first with the case where $G'$ has cut vertices. Note that
    one may iteratively remove any leaf blocks which do not contain
    terminals. This operation essentially does not impact the GH
    $Z$-Tree.  Now consider any block $L$. Contracting all other
    blocks into $L$ will put a terminal at each cut vertex in
    $L$. Since such minors must be terminal $\ktt$-free, we may
    actually add all cut vertices to $Z$ in the original graph, and
    the resulting configuration is still $\ktt$-free. Henceforth we
    assume that $Z$ includes these extra vertices and show the desired
    bag minor exists for this terminal set. This is sufficient since
    we can then retrieve the bag minor for the original terminal set
    via Lemma~\ref{lem:subset}.  One checks that a GH $Z$-Tree is
    obtained by gluing together the appropriate GH terminal trees in
    each block. Moreover, since each cut vertex is a terminal, if each
    of these blocks' tree is a bag minor (resp. weak bag minor), then
    the whole tree is a bag minor (resp. weak bag minor).  Therefore
    it is now sufficient to prove the result in the case where $G'$ is
    $2$-connected.

    If $G'$ has at most $4$ terminals, then
    Proposition~\ref{prop:blech} asserts that it has a weak bag minor
    for a GH tree. Moreover, if it has less than $4$ terminals, then
    it's GH Tree is a path and hence occurs as a bag minor.  So we now
    assume that $G'$ contains at least $5$ terminals and hence it is an
    extended OS instance whose outside face is a simple cycle.
    Lemma~\ref{lem:upshot} implies that we may replace each
    $3$-separated set by a degree-$3$ vertex and the resulting graph
    is planar and has the same pairwise connectivities amongst
    vertices in $Z$. It is easy to check that any $Z$-tree bag minor
    in this new graph is also such a minor in the original
    instance. Therefore, it is sufficient to show that any planar OS
    instance with terminals on the outside face has the desired GH
    tree bag minor.

    Denote by $t_1,t_2,\ldots,t_{|T|}$ the terminals in the order in
    which they appear on the boundary of the outer face.
    Let $B(t): t \in Z$ be the bags associated with the (necessarily
    unique) GH $Z$-tree $T$. We show that (i) each $G'[B(t)]$ is
    connected and (ii) for any $st \in T$, there is some edge of $G$
    between $B(s)$ and $B(t)$.

    Consider the fundamental cuts associated with edges incident to
    some terminal $t$.  Let $X_1,X_2, \ldots X_k$ be their shores
    which do not contain $t$.  Since any min-cut is central, each
    $X_i$ intersects the outside face in a subpath of its boundary.
    Hence, similar to Claim~\ref{claim:xXiconnected}
    (cf. Figure~\ref{fig:ordering}), we can order them
    $X_1,\ldots,X_k$ in clockwise order on the boundary with $t$
    between $X_k$ and $X_1$.

    The next two claims complete the proof of the theorem.

    \begin{claim}
      For each terminal $t$, $G'[B(t)]$ is connected.
    \end{claim}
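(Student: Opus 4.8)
The plan is to read the claim as the graph-theoretic incarnation of the topological fact that a disk from which one removes several disjoint boundary ``bites'' stays connected, and then to make this rigorous by uncrossing the fundamental cuts. Recall the set-up: the shores $X_1,\dots,X_k$ of the GH-tree edges at $t$ partition $V(G')\setminus B(t)$, each $\delta(X_i)$ is a central minimum $t$-$s_i$ cut (where $s_i\in X_i$ is the terminal that is $t$'s neighbour across that tree edge), and by centrality each $X_i$ meets the outer cycle $C$ in a single arc $A_i$; these arcs occur clockwise as $X_1,\dots,X_k$ with $t$ lying in the gap between $A_k$ and $A_1$. Thus $B(t)$ is exactly what remains of the disk bounded by $C$ after deleting the $k$ boundary \emph{caps} $X_i$.

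To prove connectivity I would argue by contradiction, writing $B(t)=P\sqcup Q$ with $t\in P$, $Q\neq\emptyset$, and $d(P,Q)=0$, and I would first bound how many shores $Q$ can touch. Since every edge leaving $Q$ lands in $\bigcup_i X_i$, we have $c(Q)=\sum_i d(Q,X_i)$. For each $i$ the set $X_i\cup Q$ still separates $t$ from $s_i$ and differs from $X_i$, so uniqueness of the minimum cut gives $c(X_i\cup Q)>c(X_i)$; substituting $c(X_i\cup Q)=c(X_i)+c(Q)-2\,d(Q,X_i)$ yields the strict inequality $2\,d(Q,X_i)<c(Q)$ for every $i$. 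If $Q$ met a single shore $X_{i_0}$ this would read $2\,d(Q,X_{i_0})<c(Q)=d(Q,X_{i_0})$, which is impossible; if $Q$ met exactly two shores $X_a,X_c$ it would force simultaneously $d(Q,X_a)<d(Q,X_c)$ and $d(Q,X_c)<d(Q,X_a)$, again impossible. Hence $Q$ must be adjacent to at least three of the shores.

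The remaining case, and the main obstacle, is to rule out $Q$ touching three or more shores; this is where planarity (equivalently, the extended OS structure of Theorem~\ref{thm:k23char}) is indispensable, since the cut inequalities alone permit it. Here I would run a Jordan-curve argument in the planar embedding, in the spirit of the outerplanar Claim~\ref{claim:xXiconnected} (cf.\ Figure~\ref{fig:ordering}): a connected set $Q\subseteq B(t)$ reaching three pairwise-disjoint boundary arcs $A_a,A_b,A_c$ partitions the open disk into regions, with $t$'s gap lying in only one of them, so the middle shore $X_b$ is fenced away from $P$ and \emph{all} edges from $X_b$ into $B(t)$ must go to $Q$. Tracing $C$ together with a $t$--$s_b$ path should then yield either a terminal-$\ktt$ minor built on $\{t,s_a,s_b,s_c\}$ and one further terminal, contradicting terminal-$\ktt$ freeness, or a $t$--$s_b$ cut of capacity below $c(X_b)$, contradicting minimality of $\delta(X_b)$. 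I expect this conversion of ``$Q$ reaches three arcs'' into a concrete forbidden substructure to be the technical heart of the proof; the cut-uncrossing above is merely what trims the problem down to this planar case.
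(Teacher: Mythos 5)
Your reduction is correct as far as it goes: the uncrossing inequality $2\,d(Q,X_i) < c(Q)$ (valid by the uniqueness of minimum cuts) does rule out a component of $G'[B(t)]$ other than $t$'s touching only one or two shores, and this is a nice self-contained replacement for the paper's treatment of the one-shore case (the paper kills $N(C)\subseteq X_i$ by exhibiting the cheaper $t$-cut $\delta(C\cup X_i)$). But the case you yourself flag as ``the technical heart'' --- a component touching three or more shores --- is exactly where the proposal stops being a proof, and there are two concrete problems with the sketch. First, the fencing statement is too strong: after drawing the Jordan curve through $X_a\cup Q\cup X_c$, the edges from the middle shore $X_b$ into $B(t)$ need not all go to $Q$; they may go to \emph{other} components of $G'[B(t)]$ nested inside the fenced region, and those nested components can themselves touch several shores. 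Any correct argument must control this nesting; the paper does so by an extremal choice (take the component $C$ with maximum span $(j,j')$ and, among these, the one whose region in the embedding is outermost), which guarantees that every component adjacent to the middle is itself confined there. Second, your proposed endgame --- ``either a terminal-\ktt{} minor on $\{t,s_a,s_b,s_c\}$ plus a fifth terminal, or a $t$--$s_b$ cut cheaper than $c(X_b)$'' --- is never carried out, and neither branch is routine: the natural candidate cuts come out \emph{larger}, not smaller (indeed $c(X_b\cup Q)=c(X_b)+c(Q)-2\,d(Q,X_b)>c(X_b)$ by your own inequality), and a \ktt{} branch set routed through $Q$ contains no terminal, so building a terminal minor would need structure you have not established (e.g.\ adjacency of $B(t)$ to each shore, which in the paper is Claim~\ref{claim:shores}, proved \emph{after} this claim).

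The missing idea is that you already hold the right tool: your second-paragraph derivation is precisely the proof of the paper's Lemma~\ref{lem:middle}. Apply it not to $Q$ against a single shore but to the whole fenced set: let $X=X_j$, $Y=X_{j'}$ be the extreme shores touched by the extremally chosen component $C$, and let $M$ consist of $C$, the middle shores $X_{j+1},\dots,X_{j'-1}$, and all components confined to the region. Planarity then gives $d(M,V\setminus(X_j\cup X_{j'}\cup M))=0$, while Lemma~\ref{lem:middle} forces this quantity to be strictly positive --- an immediate contradiction, with no forbidden-minor extraction or cheaper-cut construction needed. As written, your proposal settles the easy cases by a legitimate (and slightly different) counting argument, but leaves the main case as a plausible, unproven sketch with a flawed fencing step.
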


  \begin{proof}
    By contradiction, let $C$ be a component of
    $G' \setminus (X_1 \cup \ldots \cup X_k)$ which does not contain
    $t$.  If $N(C) \subseteq X_i$ for some $i \in \{1,\ldots,k\}$,
    then $\delta(C \cup X_i)$ is a cut separating $t$ from any vertex
    in $X_i$ with capacity smaller than $\delta(X_i)$, contradicting
    the minimality of $X_i$.

    Otherwise choose $j < j'$ be such that
    $N(C) \cap X_j, N(C) \cap X_{j'}$ are non-empty and $j'-j$ is
    maximized.  Call $(j,j')$ the {\em span} of $C$. Without loss of
    generality $C$ has the largest span amongst all components other
    than $B(t)$ (whose span is $(1,k)$ incidentally).  Moreover,
    amongst those (non $B(t)$) components with span $(j,j')$ we may
    assume that $C$ was selected to maximize the graph ``inside'' the
    embedding of $G'[X_j \cup X_{j'} \cup M']$, where
    $M'=C \cup X_{j+1} \cup \ldots X_{j'-1}$.  In particular, any
    component $C'$ with neighbhours in $M$' has
    $N(C') \subseteq M' \cup X_j \cup X_{j'}$.  Let $M$ be the union
    of $M'$ and all such components $C'$. By construction $M$ is
    non-empty and $t \not\in M$, however
    $d(M,V \setminus (X_j \cup X_{j'} \cup M))=0$ which contradicts
    Lemma~\ref{lem:middle} if we take $X=X_j,Y=X_{j'}$.
  \end{proof}

  \begin{claim}\label{claim:shores}
    For each $i \in \{1,\ldots,k\}$, there is an edge from a vertex in
    $B(t)$ to a vertex in $X_i$.
  \end{claim}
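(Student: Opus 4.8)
The plan is to mirror the proof of Claim~\ref{claim:xXiconnected}, replacing the single vertex $v$ there by the connected bag $B(t)$ and replacing outerplanarity by the planarity of the (reduced) OS instance, whose relevant terminals all lie on the outer face. Throughout I use that $G'[B(t)]$ is connected (the preceding claim) and that each shore $X_\ell$, being central, is connected and meets the outer boundary in a single arc; by the chosen clockwise ordering the boundary reads $B(t), X_1, X_2, \ldots, X_k$ and returns to $B(t)$.

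First I would argue by contradiction, assuming $d(B(t), X_i) = 0$, and dispose of the boundary cases. Since $t$ sits between $X_k$ and $X_1$ on the outer cycle, consecutive boundary arcs share a cycle edge, so there are (boundary) edges from $B(t)$ to both $X_1$ and $X_k$; hence the claim is immediate for $i \in \{1,k\}$ and I may assume $2 \le i \le k-1$. Then I choose $j \in \{1,\ldots,i-1\}$ maximal and $j' \in \{i+1,\ldots,k\}$ minimal with $d(B(t),X_j) > 0$ and $d(B(t),X_{j'}) > 0$ (both exist because $j=1$ and $j'=k$ qualify). Setting $M := X_{j+1} \cup \cdots \cup X_{j'-1}$, maximality of $j$, minimality of $j'$, and the assumption $d(B(t),X_i)=0$ together give $d(B(t),M)=0$, while $X_i \subseteq M$ ensures $M \neq \emptyset$.

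Next I would invoke Lemma~\ref{lem:middle} with its $t$ taken to be our terminal $t$, with $X := X_j$ and $Y := X_{j'}$ (each is the shore of a GH-tree fundamental cut, hence a minimum cut separating $t$ from the terminal $s_j \in X_j$, resp.\ $s_{j'} \in X_{j'}$), and with its $M$ taken to be the set $M$ above. The lemma yields an edge leaving $M$ into $V \setminus (X_j \cup X_{j'} \cup M) = B(t) \cup X_1 \cup \cdots \cup X_{j-1} \cup X_{j'+1} \cup \cdots \cup X_k$. If this edge lands in $B(t)$ we contradict $d(B(t),M)=0$, so it must reach a \emph{far} shore $X_\ell$ with $\ell < j$ or $\ell > j'$.

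The crux --- and the step I expect to be the main obstacle --- is to rule this out using planarity. The idea is that the edges certifying $d(B(t),X_j)>0$ and $d(B(t),X_{j'})>0$, together with connecting arcs interior to the connected sets $B(t)$, $X_j$ and $X_{j'}$ and the boundary arc spanning $X_j, M, X_{j'}$, bound a closed curve $\gamma$ that encloses $M$ while leaving every far shore $X_\ell$ (and the bulk of $B(t)$) on the outside. Since $\gamma$ is built only from the outer-face boundary (which no edge crosses), the two connecting edges, and curves interior to $B(t), X_j, X_{j'}$, any edge from $M$ to a far shore $X_\ell$ --- whose endpoints avoid $B(t), X_j, X_{j'}$ --- would have to cross $\gamma$, contradicting the planar embedding. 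I will need to verify carefully that $\gamma$ is a simple closed curve genuinely separating $M$ from the far shores; this is precisely where the connectivity of $B(t)$, $X_j$, $X_{j'}$ and the single-arc property of central cuts are used. Granting this, the only surviving option is an edge from $M$ to $B(t)$, contradicting $d(B(t),M)=0$ and completing the proof.
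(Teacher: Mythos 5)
Your proof is correct and takes essentially the same route as the paper's: argue by contradiction, pick $j$ maximal and $j'$ minimal with $d(B(t),X_j)>0$ and $d(B(t),X_{j'})>0$ (which exist because $X_1$ and $X_k$ meet $B(t)$ along the outer cycle), set $M := X_{j+1} \cup \cdots \cup X_{j'-1}$, and contradict Lemma~\ref{lem:middle} with $X = X_j$, $Y = X_{j'}$. The only difference is one of exposition: the paper simply asserts $d(M, V \setminus (M \cup X_j \cup X_{j'})) = 0$, leaving implicit the planarity/Jordan-curve argument excluding edges from $M$ to the far shores, which you spell out explicitly (using the connectivity of $B(t)$, $X_j$, $X_{j'}$ from the centrality of the cuts and the preceding claim).
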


  \begin{proof}
    By contradiction, suppose $\delta(B(t),X_i) = \emptyset$, for some
    $i \in \{1,\ldots,k\}$. Let $j$ maximum and $j'$ minimum such that
    $j < i < j'$, $\delta(B(t),X_j) \neq \emptyset$ and
    $\delta(B(t),X_{j'}) \neq \emptyset$. Note that $j$ and $j'$ are
    defined because $X_1$ and $X_k$ are adjacent to $B(t)$ by the
    outer cycle.  If we define $M := X_{j+1} \ldots \cup X_{j'-1}$,
    then $d(M, V \setminus (M \cup X_j \cup X_{j'}))=0$, contradicting
    Lemma~\ref{lem:middle} where we take $X=X_j,Y=X_{j'}$.
  \end{proof}

\end{proof}


Finally, we provide the proof for Proposition~\ref{prop:blech}.

\begin{proof}
  We first consider the case where we have $4$ terminals and let $T$
  be the unique GH tree.  Suppose that $T$ is a star with center
  vertex $1$ and let $B_1,B_2, B_3,B_4$ be the bags.  Since each
  fundamental cut of $T$ is central (in $G$) we have that
  $B_2,B_3,B_4$ each induces a connected subgraph of $G$. Let
  $Y \subseteq B_1$ be those vertices (if any) which do not lie in the
  same component of $G[B_1]$ as $1$.  We may try to produce $T$ as a
  weak bag minor of $G$ by deleting $Y$.  This fails only if for some
  $j \geq 2$, $d(B_1 \setminus Y,B_j)=0$; without loss of generality
  $j=2$.  Let $R=B_2 \cup Y \cup B_3, S=B_2 \cup Y \cup B_4$. It
  follows that $d(R \cap S,V-(R \cup S))=0$ and hence
  $c(R)+c(S)=c(R \setminus S)+c(S \setminus R)=c(B_3)+c(B_4)$.  But
  $\delta(R)$ is a $34$-cut and so $c(R) > c(B_3)$. Similarly,
  $c(S) > c(B_4)$.  But this now contradicts the previously derived
  equality.

  Consider now the case where $T$ is a path, say $1,2,3,4$. Since each
  fundamental cut is central, $G[B_1],G[B_4]$ are connected. Now
  suppose that $G[B_2]$ is not connected.  Let $M$ be the set of
  vertices which do not lie in the same component as $2$.  If we
  define $X=B_1,Y=B_3 \cup B_4$ and $t=2,x=1,y=3$, then
  Lemma~\ref{lem:middle} implies that $d(M,B_2 \setminus M)>0$ a
  contradiction.  It remains to show that $d(B_i,B_{i+1})>0$ for each
  $i=1,2,3$.

  Suppose first that $d(B_1,B_2)=0$. Then
  $c(B_1 \cup B_3 \cup B_4) \leq c(B_3 \cup B_4)$ contradicting the
  fact that $B_3 \cup B_4$ induces the unique minimum $23$ cut.  Hence
  $d(B_1,B_2)>0$ and by symmetry $d(B_3,B_4)>0$. Finally suppose that
  $d(B_2,B_3)=0$.  One then easily checks that
  $c(B_1)+c(B_4) \geq c(B_2)+c(B_3)$.  But then either $B_2$ induces a
  second minimum $12$ cut, or $B_3$ induces another minimum $34$
  cut. In either case, we have a contradiction. The final cases where
  $|Z| \leq 3$ follow easily by the same methods.
\end{proof}

\section{Acknowledgements.}

This paper is dedicated to T.C. Hu for his elegant and fundamental
contributions to combinatorial optimization. Some of this work was
completed during a visit to a thematic semester in combinatorial
optimization at the Hausdorff Research Institute for Mathematics in
Bonn. We are grateful to the institute and the organizers of the
semester.  We thank Bruce Reed who informed us about the reduction
used to derive the general linkage theorem.

\bibliography{conf}{}
\bibliographystyle{plain}

\end{document}